\documentclass[11pt,a4paper]{article}
\usepackage[margin=2.2cm]{geometry}
\usepackage{amsmath, amssymb, amsthm}
\usepackage{booktabs, array}
\usepackage{graphicx}
\usepackage{float}
\usepackage[authoryear,round]{natbib}
\usepackage[hidelinks]{hyperref}
\usepackage[T1]{fontenc}
\usepackage{microtype}
\usepackage{enumitem}
\usepackage{needspace}
\newtheorem{theorem}{Theorem}

\newtheorem{lemma}[theorem]{Lemma}
\newtheorem{corollary}[theorem]{Corollary}
\theoremstyle{definition}

\theoremstyle{remark}
\newtheorem{remark}{Remark}

\newcommand{\RR}{\mathbb{R}}
\newcommand{\dd}{\,d}
\newcommand{\one}{\mathbf 1}
\newcommand{\eee}[1]{\times 10^{-#1}}

\newif\ifblind
\blindfalse

\title{Closed forms all the way up:\\
the sample variance of four observations}
\ifblind
\author{Anonymized for review}
\else
\author{%
  Remus Osan\thanks{Gilead Sciences, Foster City, CA 94404, USA.
    \texttt{remus.osan@gilead.com}.}%
}
\fi
\date{October 10, 2026}

\begin{document}
\maketitle

\begin{abstract}
The distribution function $F(y)=\Pr(S^2\le y)$ of the sample variance $S^2$ of
four independent observations is elementary on its entire support for every polynomial parent
density on $[0,1]$ (without loss of generality), and we give it in finite closed form for the uniform, Beta$(2,1)$ and
Beta$(2,2)$ parents.  The support splits at $y = 1/6$, $2/9$ and $1/4$, with the ceiling at
$1/3$, all fixed by geometry: in coordinates aligned with the cube diagonal the
event $S^2\le y$ is a ball of radius $\sqrt{3y}$, the cube projects to a rhombic dodecahedron,
and these are the levels at which the ball reaches its faces, edges and two kinds of
vertices.  The divergence theorem reduces every moment of the clipped ball to integrals
along circular arcs, so each regime closes in algebraic functions of $\sqrt{y}$ and
inverse trigonometric functions.  For the uniform parent the only non-obvious constant is
$K_g = 12\sqrt2\arctan\sqrt2$, proportional to the mean width of the cuboctahedron; the Beta
constants are rational expressions in $\pi$, $\sqrt2$, $\sqrt3$ and $\arctan\sqrt2$, as in the
uniform case.  At $n = 4$ the
normal-theory $\chi^2$ recipe used in acceptance testing misplaces the $99$th percentile of
$S^2$ by $50\%$ for the uniform parent.  For three observations, polynomial parents also give
elementary closed forms (companion paper), but two elementary cases in a row do not
guarantee a third, and we make no claim beyond $n = 4$.
\end{abstract}

\medskip
\noindent\textbf{Keywords:} sample variance; exact distribution; bounded support;
closed form; rhombic dodecahedron; small samples.

\smallskip
\noindent\textbf{Mathematics Subject Classification (2020):} Primary 62E15; Secondary 60E05, 52B10.

\section{Introduction}\label{sec:intro}

The distribution of the sample variance of a handful of observations
from a \emph{bounded} parent is a problem with an old history and a
live industrial application.  \citet{rietz1931} obtained the
$n = 3$ uniform law in 1931, as the density of the standard
deviation.\footnote{The result originated as a problem in solid
geometry---the area of a diagonal cylinder's surface lying inside a
unit cube---posed by P.~R. Rider (Problem~3413, \textit{Amer.\ Math.\
Monthly} \textbf{37} (1930), 157) and solved by Rietz (\textit{ibid.}\
\textbf{38} (1931), 292--295); Rietz reinterpreted the same integral
as a probability \citep{rietz1931}.}  Rietz also conjectured that the
density of $S$ is a polynomial of degree $n-1$ for small $s$; \citet{aliRichards1975} proved it by deriving, for
every $n \ge 3$, an elementary formula for the lowest
regime---in our normalization $Y \le \tfrac{1}{2(n-1)}$---of the
form $c_1 r^{n-1} - b(n)\,r^{n}$ with $r = \sqrt{(n-1)Y}$,
evaluating the constant $b(n)$ exactly for $n \le 5$ and noting
that beyond that regime the shaved volume ``is generally difficult
to compute''; to our knowledge, no finite closed form beyond the
first regime has been published at any $n \ge 4$.
\citet{royen2007b,royen2008} gave an infinite Fourier series
valid for every bounded continuous parent and every $n$, without
offering a closed form beyond the first regime, and later work
resorts to approximations \citep{provost2013}.  Content-uniformity testing in pharmaceutical
manufacturing \citep{usp905,williamsHauck2002,bergum1990} evaluates
acceptance criteria built on $(\bar X, s)$ from small samples of a
quantity that is physically confined to a bounded range and typically
non-Gaussian \citep{orr1978}: the content of a unit is set by the
number and size of the drug-substance particles it captures, bulk
and particle properties that are characterized by imaging and
related methods \citep{morrison2021,tin2025,mohan2026} and that propagate to the
unit-content distribution through stochastic tablet-formation
models \citep{chu2025}.  Exact small-$n$ variance laws for bounded
parents are the natural computational currency there, and the
computation of the acceptance probabilities themselves from such
laws, for the mean--spread rules the standards use, is carried out
in \citet{enginecompanion}.    Nor are samples of three
or four hypothetical in acceptance practice: the variables-sampling
standard MIL-STD-414 \citep{milstd414}, from which ANSI/ASQ~Z1.9
descends, assigns sample size code letters $B$ and $C$ --- sample
sizes $3$ and $4$ --- to small lots, and its standard-deviation method
accepts when $(U - \bar X)/s \ge k$, a criterion in the sample mean and
the sample standard deviation (Table~B-1; $k = 1.12$ at $n = 3$ and
$1.17$ at $n = 4$ for an acceptable quality level of $2.5\%$).
A companion paper \citep{n3paper} treats
$n = 3$ in this spirit and maps which parent densities admit finite
closed forms.

This paper settles the four-observation case for the uniform parent
on the complete support.
Let $X_1,\dots,X_4$ be i.i.d.\ uniform on $[0,1]$ and
\begin{equation}
  S^2 \;=\; \tfrac13 \sum_{i=1}^4 (X_i - \bar X)^2, \qquad
  F(Y) \;=\; \Pr(S^2 \le Y), \qquad Y \in [0, \tfrac13].
\end{equation}
Our main result, Theorem~\ref{thm:main}, is the exact $F$ in finite
elementary closed form on all of $[0, 1/3]$: four analytic pieces,
joined at $Y = 1/6$, $2/9$, and $1/4$, built from
$\sqrt{Y},\,Y,\,Y^{3/2},\,Y^2$, inverse trigonometric functions of
algebraic arguments, and the constant
\begin{equation}
  K_g \;=\; 12\sqrt{2}\,\arctan\sqrt{2} \;=\; 16.2122606\ldots,
  \label{eq:Kg-intro}
\end{equation}
which is $2\pi$ times the mean width of the cuboctahedron and plays
the role that $\pi$ plays at $n = 3$.

\paragraph{Why four.}
Two facts place $n = 4$.  The first is that it is the smallest
sample size at which the closure question is genuinely open.  In
every $n$ the reduction below writes the CDF as an integral over the
unit sphere $S^{n-2}$ of contrast space, with the region where the
cube constraint binds excised; the excised region is a union of caps
around the $n(n-1)$ root directions, and the chamber walls of the
braid arrangement cut those caps into pieces.  The objects that carry
the regime structure are the pieces of cap boundary inside a chamber,
which have dimension $n - 3$.  At $n = 3$ the sphere is a circle and
the pieces are points: the limits of integration are angles that
depend on $Y$ alone, and the CDF closes in inverse trigonometric
functions of algebraic arguments essentially by inspection.  At
$n = 4$ the pieces are arcs, and their length $\beta(\psi)$ is a
\emph{function} of the colatitude that must itself be integrated.
This is the first $n$ at which a limit of integration is a
transcendental function rather than a number, so elementarity is no
longer automatic.  Theorem~\ref{thm:main} shows that it holds for the
uniform parent, and Theorem~\ref{thm:poly} for every polynomial
parent, by a mechanism that uses the arcs: the divergence theorem
turns every moment of the clipped ball into trigonometric integrals
along circular arcs and areas of regions bounded by them, and both
are elementary.  The argument offers no guarantee for larger $n$,
where the pieces are polygons, and we make no claim there.  The second fact is that the exact law
is not made redundant by the normal limit.  For a parent with nonzero
excess kurtosis $\kappa$, the uniform among them, the recipe in
acceptance-testing use, $(n-1)S^2/\sigma^2 \sim \chi^2_{n-1}$,
misstates the variance of $S^2$ by a factor that does not tend to
one: $\operatorname{Var} S^2 = \sigma^4[2/(n-1) + \kappa/n]$, so the
actual variance divided by the recipe's tends to $1 + \kappa/2$,
which is $2/5$ for the uniform parent.  For that parent the recipe's
grid-maximum error against the exact CDF (Table~\ref{tab:approx}) grows from $0.064$ at $n = 3$ to
$0.102$ at $n = 6$, toward a limiting Kolmogorov discrepancy of
$0.109$, while
its relative error in the $99$th percentile vanishes only like
$1.21\,n^{-1/2}$ and is still $42\%$ at $n = 6$
(Section~\ref{sec:ledger}).

\paragraph{The geometry.}
The sample $(X_1, X_2, X_3, X_4)$ is a point of the unit $4$-cube
$[0,1]^4$, whose main diagonal is the segment
$\{t\,(1,1,1,1) : 0 \le t \le 1\}$ joining the corner $(0,0,0,0)$ to
the corner $(1,1,1,1)$; the sample mean places the point
$(\bar X, \bar X, \bar X, \bar X)$ on it.  Work in an orthonormal
basis split along that diagonal: write each observation as mean plus
deviation, $X_i = \bar X + w_i$.  The deviation vector $w$ lives in the
$3$-dimensional hyperplane $\{w : \sum_i w_i = 0\}$ orthogonal to
the diagonal---the
\emph{contrast space}, in the analysis-of-variance sense---and it
carries all of the sample variance: $\sum_i w_i^2 = 3S^2$.  In
words, $F(Y)$ is the fraction of the cube lying within distance
$\sqrt{3Y}$ of its main diagonal.  The
variance constraint is therefore a centered ball of radius
$R = \sqrt{3Y}$ in contrast space, and the unit cube projects to a
\emph{rhombic dodecahedron} \citep{coxeter1973}: the convex
polyhedron bounded by twelve congruent rhombic faces, with
twenty-four edges and fourteen vertices of two kinds---eight
\emph{corner} vertices, where three faces meet, and six \emph{axial}
vertices, where four meet.  This is the
three-dimensional ascent of the picture that organizes the companion
$n = 3$ paper \citep{n3paper}: there the variance set is a disk
growing inside the cube's hexagonal shadow, with a single interior
threshold at $Y = 1/4$, where the disk reaches the hexagon's edges,
and the ceiling $Y = 1/3$ at its vertices.  One dimension up, the
hexagon becomes the rhombic dodecahedron, and its richer
boundary---faces, edges, corner vertices, axial vertices---turns one
interior threshold into three.  The four thresholds are the four distances
from the center to the polytope's boundary orbits:
\[
  \underbrace{\tfrac{1}{\sqrt2}}_{\text{12 faces}} < \;
  \underbrace{\sqrt{\tfrac23}}_{\text{24 edges}} < \;
  \underbrace{\tfrac{\sqrt3}{2}}_{\text{8 corner vertices}} < \;
  \underbrace{1}_{\text{6 axial vertices}}
  \qquad\Longleftrightarrow\qquad
  Y \in \Bigl\{\tfrac16,\ \tfrac29,\ \tfrac14,\ \tfrac13\Bigr\}.
\]
Every qualitative feature of the CDF is visible in
Figure~\ref{fig:rd-view}: nothing happens until the ball reaches the
faces, caps are shaved off between faces and edges, the caps' bounding
wall arcs interact between edges and corner vertices, the arcs merge
between corner and axial vertices, and $F$ reaches $1$ exactly when
the ball swallows the polytope.

\begin{figure}[!t]
\centering
\begin{minipage}[c]{0.31\linewidth}
  \centering
  \includegraphics[width=\linewidth]{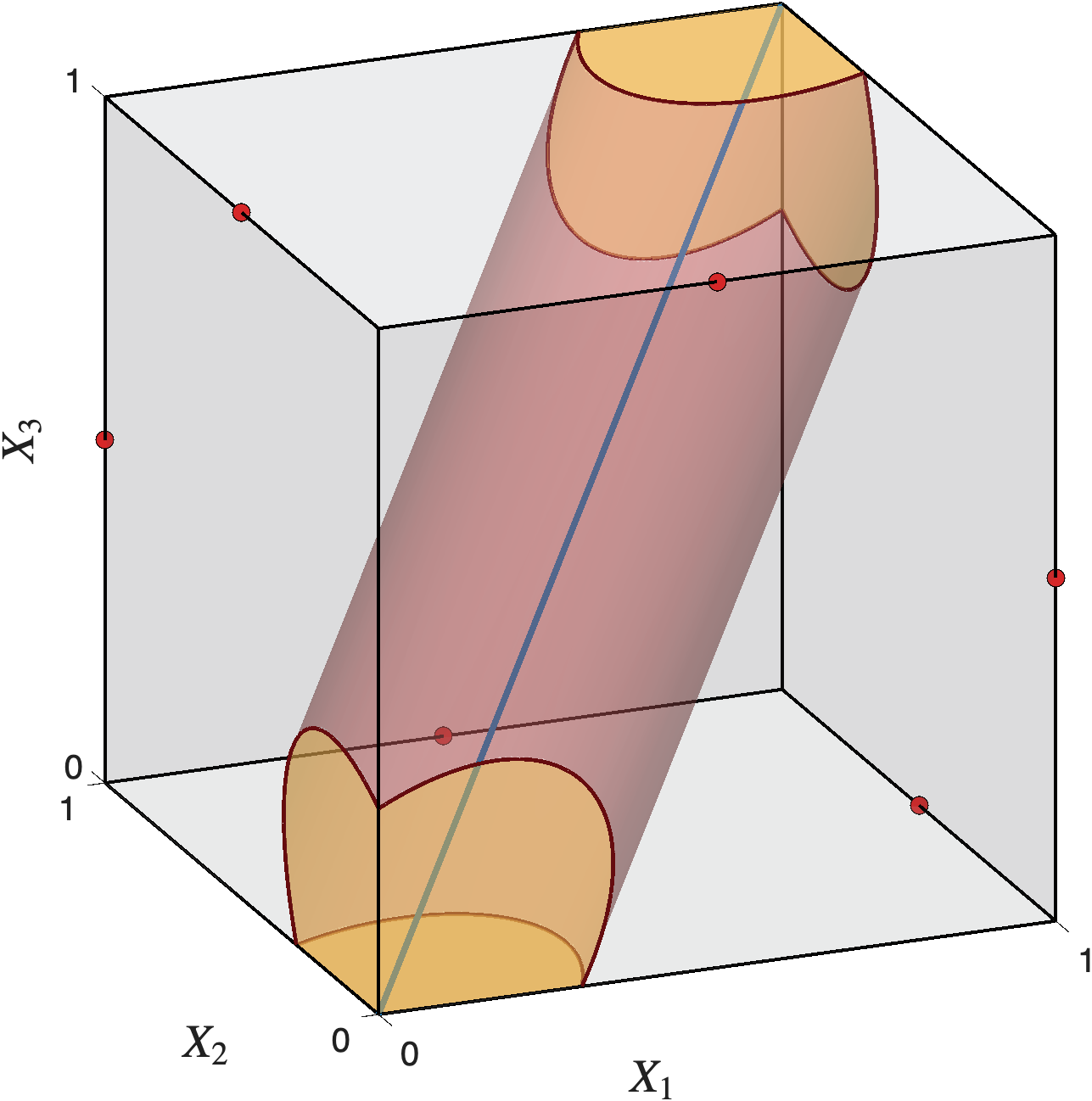}\\[6pt]
  \includegraphics[width=\linewidth]{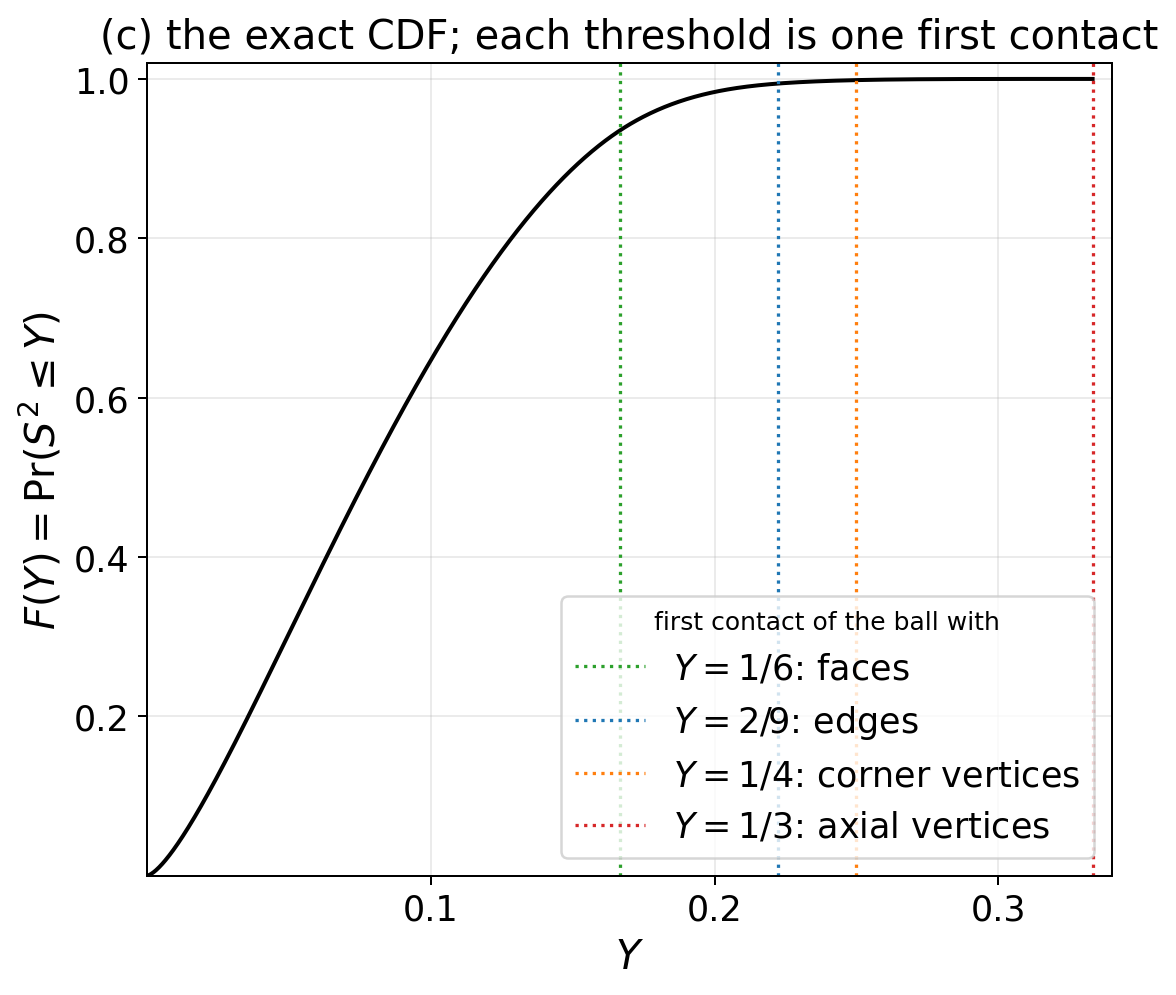}
\end{minipage}\hfill
\begin{minipage}[c]{0.66\linewidth}
  \centering
  \includegraphics[width=\linewidth]{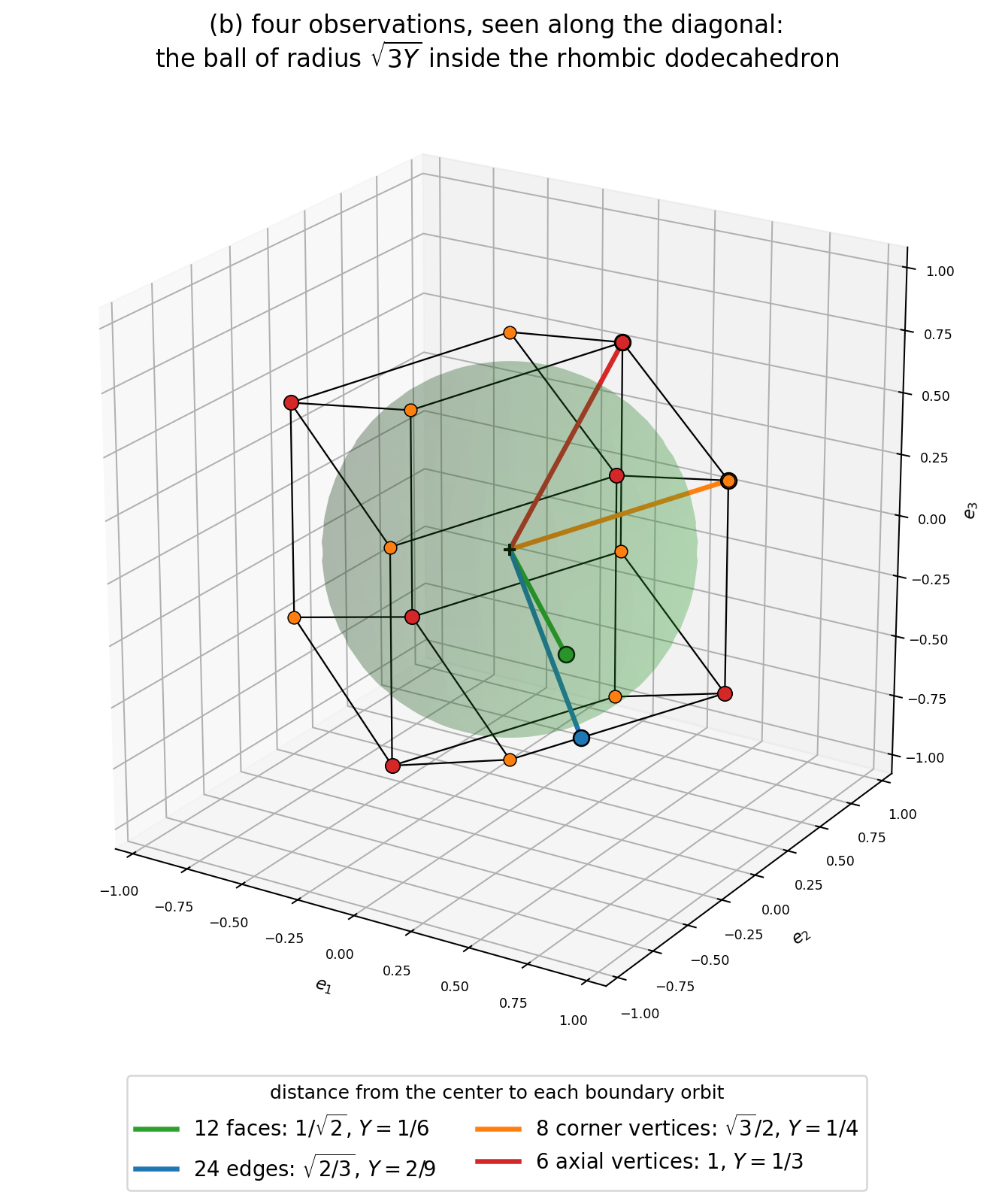}
\end{minipage}
\caption{What the CDF measures.  A sample is a point of the unit
cube, its mean is its position along the main diagonal, and its
squared distance from that diagonal is $n - 1$ times its variance,
so $F(Y)$ is the fraction of the cube lying within distance
$\sqrt{(n-1)Y}$ of the diagonal.  \emph{(a)} The picture one dimension down, where it can
be drawn, reproduced from the companion paper \citep{n3paper}: for
three observations the region is the part of the $3$-cube within
distance $\sqrt{2Y}$ of its diagonal (blue), a cylinder (pink wall)
shaved wherever it leaves the cube (orange cut faces), shown at
$Y = 0.03$, where it already meets the faces near the diagonal's
endpoints.  As $Y$ grows the cylinder first reaches the six cube
edges not incident to those endpoints, at their midpoints (red,
added here), at $Y = 1/4$, where the formula for $F$ changes, and
reaches the vertices at $Y = 1/3$, where $F = 1$.
\emph{(b)} For four observations, look along the diagonal: the
cube's shadow is the rhombic dodecahedron and the tube is a ball of
radius $\sqrt{3Y}$, drawn at the moment it touches the faces.  The
formula changes each time the growing ball first touches a new kind
of boundary point, and the four colored segments are those four
distances: the $12$ faces at $Y = 1/6$, the $24$ edges at
$Y = 2/9$, the $8$ corner vertices at $Y = 1/4$, and the $6$ axial
vertices at $Y = 1/3$, where the ball swallows the polytope and
$F = 1$.  \emph{(c)} The exact CDF with the four contacts marked;
nothing is visible to the eye, and the thresholds show up only in
the derivatives.}
\label{fig:rd-view}
\end{figure}

\paragraph{Contributions.}
\begin{enumerate}[leftmargin=1.6em, itemsep=1pt]
\item \textbf{The exact CDF} (Theorem~\ref{thm:main}): finite
  elementary closed form on all of $[0,1/3]$, with fully explicit
  antiderivatives in every regime and exact matching constants
  ($\pm\pi/\sqrt2$ at the first two joints; a closed-form ceiling
  anchor at the last).
\item \textbf{The geometric dictionary}
  (Section~\ref{sec:geometry}): thresholds $=$ boundary-orbit
  distances of the rhombic dodecahedron; the constant $K_g$ $=$
  $2\pi$ times the mean width of its polar dual, the cuboctahedron,
  via the $A_3$ root system.
\item \textbf{Validation at scale} (Section~\ref{sec:validation}):
  agreement with an independent deterministic quadrature to
  $1.3\eee{7}$ at every tabulated point, with Monte Carlo at
  $N = 2\times 10^7$, plus three internal exactness checks: density
  continuity at the thresholds, a symbolic zero for the explicit-form
  gap at $Y = 1/4$, and $F(1/3) - 1 = 5.7\eee{14}$ under numerical
  assembly.
\item \textbf{Polynomial parents} (Section~\ref{sec:poly}): the
  closure is inherited by every polynomial parent density, proved by
  reducing every polynomial moment of the clipped ball to boundary
  integrals; the four-regime exact CDFs are worked out in full for
  Beta$(2,1)$ and Beta$(2,2)$, with the regime joints at $Y = 1/6$ and
  the ceiling identity $F(1/3)=1$ certified symbolically, and the
  joints at $Y = 2/9$ and $Y = 1/4$ verified to $10^{-50}$ at $60$-digit
  working precision.
\item \textbf{Comparison with Royen's series and with the normal
  limit} (Sections~\ref{sec:royen} and~\ref{sec:ledger}): Royen's
  series, assembled with its exact constant, reaches $10^{-3}$ in
  about ten terms in the interior of the support but keeps an
  endpoint error near $Y = 0$, while the closed form is exact
  everywhere at fixed cost; and the exact CDFs at $n = 3$ to $6$ show
  the $\chi^2$ recipe's CDF error growing with $n$ while a
  kurtosis-corrected normal improves.
\end{enumerate}

\section{The diagonal-orthogonal coordinate system}\label{sec:coords}

Let $E$ be a $3\times4$ matrix with orthonormal rows spanning
$\one^\perp \subset \RR^4$; the Helmert contrasts are a standard
choice.  This route into the problem is itself classical: the proof
of \citet{aliRichards1975} already passes through
the ordered chamber, a Helmert rotation with one axis along the
diagonal, and polar coordinates on the contrast space---the
apparatus of this section, in 1975 notation.  What their treatment
did not extract, and what drives everything below, is the geometry
those coordinates expose.  Split $X \in \RR^4$ orthogonally as
\begin{equation}
  X = \mu\,\one + E^{\!\top} y, \qquad
  \mu = \bar X, \quad y = E X \in \RR^3 .
\end{equation}
Then $X_i = \mu + w_i$ with $w = E^{\!\top}y$, $\sum_i w_i = 0$, and
\begin{equation}
  3\,S^2 = \sum_i (X_i - \bar X)^2 = \|y\|^2,
  \qquad\text{so}\qquad
  \{S^2 \le Y\} = \{\|y\| \le R\}, \quad R = \sqrt{3Y}.
\end{equation}
The variance constraint is a centered \emph{ball} in contrast space.
The change of variables has $\dd X = 2\,\dd\mu\,\dd y$, and for the
uniform parent the $\mu$-integration is immediate: $\mu$ ranges over
the slab $[-\min_i w_i,\ 1 - \max_i w_i]$, of length
$\bigl(1 - g(w)\bigr)_+$, where
\begin{equation}
  g(w) \;=\; \max_i w_i - \min_i w_i
  \label{eq:range-gauge}
\end{equation}
is the \emph{range} of the contrast vector.  Hence the entire CDF is
one three-dimensional integral,
\begin{equation}
  F(Y) \;=\; 2 \int_{\|y\|\le R}
     \bigl(1 - g(E^{\!\top}y)\bigr)_+ \dd y .
  \label{eq:master}
\end{equation}
The function $y \mapsto g(E^{\!\top}y)$ is a piecewise-linear norm on
$\RR^3$; its unit ball $\{g \le 1\}$ is exactly the image of the unit
cube in contrast space, a zonotope \citep{ziegler1995}.  Writing $y = r\omega$ with
$\omega \in S^2$ and $\dd y = r^2 \dd r \dd\Omega$, and
differentiating in $Y$ with $\dd R/\dd Y = 3/(2R)$,
\begin{equation}
  f(Y) \;=\; F'(Y) \;=\;
  3R \int_{S^2} \bigl(1 - R\,g(\omega)\bigr)\,
    \one\{g(\omega) < 1/R\} \dd\Omega ,
  \label{eq:density}
\end{equation}
a weighted solid-angle integral over the sphere with the
high-range directions excised.  On $S^2$ the range gauge satisfies
$1 \le g(\omega) \le \sqrt2$, with the maximum $\sqrt2$ attained at
the twelve directions $b_{ij} = E(e_i - e_j)$, $i \ne j$---the
$A_3$ root system, of common norm $\sqrt2$---and the minimum $1$ in the six axial-vertex directions
$\pm(e_i + e_j - e_k - e_l)/2$.

Two convex bodies organize everything that follows.

\begin{itemize}[leftmargin=1.6em, itemsep=2pt]
\item \textbf{The cube image (rhombic dodecahedron).}  The support
  $\{g \le 1\}$ of the integrand of~\eqref{eq:master} is the shadow
  of the unit cube under the diagonal projection: a rhombic
  dodecahedron with $12$ rhombic faces at distance $1/\sqrt2$,
  $24$ edges at distance $\sqrt{2/3}$, $8$ ``corner'' vertices at
  distance $\sqrt3/2$ (cube vertices with one or three coordinates
  equal to $1$), and $6$ ``axial'' vertices at distance $1$ (cube
  vertices with exactly two coordinates equal to $1$).  The two cube
  corners on the diagonal itself project to the center.
\item \textbf{The gauge body (cuboctahedron).}  The range gauge is
  the support function of the convex hull of the twelve contrast
  roots $b_{ij}$, which is the \emph{cuboctahedron}; the rhombic
  dodecahedron is its polar dual.  The one nontrivial constant of the
  $n=4$ problem is the angular average
  \begin{equation}
    K_g \;=\; \int_{S^2} g(\omega)\dd\Omega
    \;=\; 12\sqrt{2}\,\arctan\sqrt{2},
    \label{eq:Kg}
  \end{equation}
  i.e.\ $2\pi$ times the cuboctahedron's mean width.  The closed form
  follows from the classical edge formula for the integrated mean
  curvature of a convex polytope \citep{schneider2014},
  $K_g = \frac12\sum_{e}\ell_e\,(\pi - \varphi_e)$: the cuboctahedron
  has $24$ edges of length $\sqrt2$ and a single exterior dihedral
  angle $\arccos(1/\sqrt3) = \arctan\sqrt2$.  The $\sqrt2$'s and
  $\sqrt3$'s pervading the closed forms below are this polytope's
  geometry.
\end{itemize}

\section{Four thresholds, one picture}\label{sec:geometry}

The integrand of~\eqref{eq:density} changes character exactly when
the level set $\{g = 1/R\}$---equivalently, the variance ball
of radius $R = \sqrt{3Y}$---meets a new boundary orbit of the rhombic
dodecahedron.  Since the polytope has four boundary orbits, there are
four critical radii, hence four regimes:

\begin{center}
\renewcommand{\arraystretch}{1.25}
\begin{tabular}{llll}
\toprule
threshold & ball radius $\sqrt{3Y}$ & rhombic-dodecahedron orbit &
  effect on the CDF \\
\midrule
$Y = 1/6$ & $1/\sqrt{2} \approx 0.707$ & $12$ faces (inradius) &
  caps switch on \\
$Y = 2/9$ & $\sqrt{2/3} \approx 0.816$ & $24$ edges &
  wall cuts switch on \\
$Y = 1/4$ & $\sqrt{3}/2 \approx 0.866$ & $8$ corner vertices &
  wall-cut arcs merge \\
$Y = 1/3$ & $1$ & $6$ axial vertices &
  $F = 1$ \\
\bottomrule
\end{tabular}
\end{center}

Figure~\ref{fig:rd-view} shows the four critical distances inside
the polytope; Figure~\ref{fig:rd-sequence} shows the ball at each of
them.  All four assignments were verified numerically from the
projected polytope itself: the distances from the center to the hull's
faces, edges, and two vertex orbits are $0.70710678$, $0.81649658$,
$0.86602540$, and $1$, matching $\sqrt{3Y}$ at the four thresholds to
ten digits.  The first of these distances is classical: that the
nearest boundary of the cube image lies at $1/\sqrt2$ is, for every
$n$, the Lemma of \citet{aliRichards1975}, proved
there algebraically.  The remaining three orbits---edges, corner
vertices, axial vertices---and the regime structure they open are,
to our knowledge, new.

\begin{figure}[t]
\centering
\includegraphics[width=\linewidth]{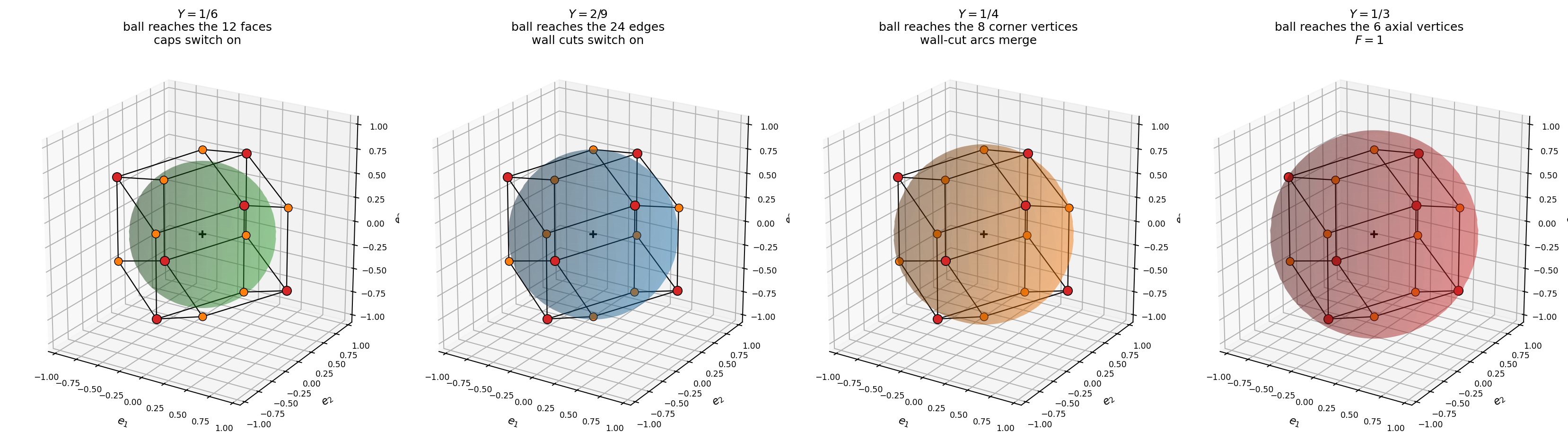}
\caption{The same polytope with the variance ball drawn at each
threshold separately.  Left to right: the ball reaches the $12$ faces
($Y = 1/6$, caps switch on), the $24$ edges ($Y = 2/9$, wall cuts
switch on), the $8$ corner vertices ($Y = 1/4$, the wall-cut arcs
merge), and the $6$ axial vertices ($Y = 1/3$, where $F = 1$).}
\label{fig:rd-sequence}
\end{figure}

\begin{remark}[The ball exits through the roots]\label{rem:roots}
The first exit is dimension-independent in character.  The twelve
face centers of the rhombic dodecahedron lie in the directions of
the normalized contrast roots $u_{ij}$, so the ball first leaves the
cube image through the $n(n-1) = 12$ root directions---precisely the
cap axes of Section~\ref{sec:derive-mid}.  At $n = 3$ the same
statement reads: the disk first leaves the hexagon through its six
edge midpoints, the $n(n-1) = 6$ roots of $A_2$; at $n = 5$ the ball
will exit through the twenty facets of the four-dimensional cube
image, the roots of $A_4$.  In the upper-middle regime
$\tfrac29 < Y < \tfrac14$ the portion of the sphere still inside
the polytope breaks into fourteen islands with full octahedral
symmetry---eight around the corner vertices and six around the
axial vertices.\footnote{Readers of a sporting inclination will
recognize the fourteen-panel layout of the adidas Teamgeist, the
2006 FIFA World Cup match ball, whose design is based on the
truncated octahedron---the order-$4$ permutohedron whose chamber
combinatorics is exactly the $S_4$ structure of
Appendix~\ref{app:trig}.}
\end{remark}

\section{The exact CDF}\label{sec:closedform}

The result has four analytic pieces; the map below links each regime
to its formula and to the geometric event that opens it.

\begin{center}
\renewcommand{\arraystretch}{1.25}
\begin{tabular}{lllll}
\toprule
regime & interval & $F$ & density & geometric event at left endpoint \\
\midrule
low & $\bigl[0,\tfrac16\bigr]$ & \eqref{eq:Flow} & \eqref{eq:flow} &
  --- \\
middle & $\bigl[\tfrac16,\tfrac29\bigr]$ & \eqref{eq:Fmid} &
  \eqref{eq:fmid} & ball exits the $12$ faces; caps switch on \\
upper-middle & $\bigl[\tfrac29,\tfrac14\bigr]$ & \eqref{eq:FR4a} &
  \eqref{eq:fR4a} & ball crosses the $24$ edges; wall cuts \\
high & $\bigl[\tfrac14,\tfrac13\bigr]$ & \eqref{eq:FR4b} &
  \eqref{eq:fR4b} & ball passes the $8$ corner vertices; cuts merge \\
\bottomrule
\end{tabular}
\end{center}

At the right end of the high regime the ball reaches the $6$ axial
vertices and $F(1/3) = 1$.

\subsection{Statement}

Define the four $Y$-dependent angles
\begin{equation}
 A_1=\arctan\tfrac{1}{\sqrt{9Y/2-1}},\quad
 A_2=\arctan\tfrac{1}{\sqrt{18Y-4}},\quad
 A_3=\arccos\tfrac{1}{\sqrt{18Y-3}},\quad
 A_4=\arctan\sqrt{3-\tfrac{2}{3Y}},
 \label{eq:angles}
\end{equation}
and the two auxiliary arctangents
\begin{equation}
  B_\pm(Y) \;=\;
  \arctan\!\frac{3\sqrt{3}\,\sqrt{Y} \pm 2\sqrt{2}}{\sqrt{9Y-2}} ,
  \qquad
  W(Y) \;=\; \arctan\frac{\sqrt2}{\sqrt{9Y-2}} ,
  \label{eq:Bpm}
\end{equation}
where at the endpoint $Y = 2/9$ the quantities $A_1$, $A_2$, $B_+$,
and $W$ are understood as their limiting value $\pi/2$, and $B_-$ as
its limiting value $-\pi/2$, since $\sqrt6 - 2\sqrt2 < 0$.

\begin{theorem}[Exact elementary CDF at $n=4$]\label{thm:main}
For $X_1,\dots,X_4$ i.i.d.\ uniform on $[0,1]$ the CDF
$F(Y)=\Pr(S^2\le Y)$ is, with $K_g = 12\sqrt2\arctan\sqrt2$:

\smallskip
\textbf{Low regime} $0 \le Y \le \tfrac16$ (ball inside the
polytope):
\begin{equation}
  F(Y) \;=\; 8\sqrt{3}\,\pi\,Y^{3/2} \;-\; \tfrac92 K_g\,Y^{2}.
  \label{eq:Flow}
\end{equation}

\textbf{Middle regime} $\tfrac16 \le Y \le \tfrac29$ (caps):
\begin{equation}
  F(Y) \;=\; -\frac{\pi}{\sqrt2}
  \;+\; 18\sqrt{2}\,\pi\,Y
  \;-\; 40\sqrt{3}\,\pi\,Y^{3/2}
  \;+\; \Bigl(54\sqrt{2}\,\pi - \tfrac92 K_g\Bigr) Y^{2}.
  \label{eq:Fmid}
\end{equation}

\textbf{Upper-middle regime} $\tfrac29 \le Y \le \tfrac14$ (caps with
disjoint wall cuts):
\begin{equation}
  F(Y) \;=\; \frac{\pi}{\sqrt2} \;+\; \Phi_a(Y),
  \label{eq:FR4a}
\end{equation}
\begin{align}
  \Phi_a(Y) \;=\;\;& \tfrac43\sqrt{9Y-2}
   \;+\; 18\sqrt2\,Y\,\bigl(\pi - 4A_3\bigr)
   \;+\; 8\sqrt3\,Y^{3/2}\bigl(24 A_4 - 5\pi\bigr)
   \;+\; \sqrt2\,\bigl(B_- - B_+\bigr) \notag\\
  &+\; 27\sqrt2\,Y^{2}\Bigl(\pi
      - 4W
      + 4A_2 + 2\arcsin\tfrac{1}{\sqrt3} - 4A_3\Bigr).
  \label{eq:Phia}
\end{align}

\textbf{High regime} $\tfrac14 \le Y \le \tfrac13$ (merged wall
cuts):
\begin{equation}
  F(Y) \;=\; 1 \;-\; \Phi_b\bigl(\tfrac13\bigr) \;+\; \Phi_b(Y),
  \qquad
  \Phi_b\bigl(\tfrac13\bigr)
   = \tfrac23 + 9\sqrt2\,\arctan\sqrt2 + 4\sqrt2\,\arctan(2\sqrt2),
  \label{eq:FR4b}
\end{equation}
\begin{align}
  \Phi_b(Y) \;=\;\;& \tfrac23\sqrt{9Y-2}
  \;+\; 9\sqrt2\,Y\Bigl(\pi + 2\arcsin\tfrac13 - 4A_3\Bigr)
  \;-\; 24\sqrt3\,Y^{3/2}\bigl(\pi - 4A_4\bigr) \notag\\
  &-\; 9\sqrt2\,Y^{2}\Bigl(\pi + 6\arctan\tfrac{\sqrt2}{5}
     - 4\arcsin\tfrac13 - 6\arcsin\tfrac{1}{\sqrt3}
     - 6\arcsin\tfrac{1}{\sqrt{18Y-3}} \notag\\
  &\hphantom{-\; 9\sqrt2\,Y^{2}\Bigl(\pi\;}
     + 2\arctan\sqrt2
     - 6\arctan\sqrt{\tfrac{9Y}2 - 1}
     + 6\arctan\sqrt{18Y-4}\Bigr) \notag\\
  &+\; \frac{1}{\sqrt2}\Bigl(15A_3 - 8B_- + 8B_+
     + 3\arctan\sqrt{18Y-4}\Bigr).
  \label{eq:Phib}
\end{align}
The four expressions agree at the three interior thresholds; the gap
of \eqref{eq:FR4a} and \eqref{eq:FR4b} at $Y = 1/4$ is symbolically
zero, and $F(1/3) = 1$ holds exactly by construction of the ceiling
anchor in \eqref{eq:FR4b}, and numerically under independent assembly
to $5.7\eee{14}$.
\end{theorem}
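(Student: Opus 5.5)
We will prove the theorem one regime at a time, starting from the master integral \eqref{eq:master}. There $F(Y)=2\int_{\|y\|\le R}(1-g)_+\,dy$ with $R=\sqrt{3Y}$, and the integrand is supported on the rhombic dodecahedron $P=\{g\le1\}$. So $F(Y)=2\bigl(|B_R\cap P| - \int_{B_R\cap P} g\,dy\bigr)$.

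\emph{Low regime.} For $R\le 1/\sqrt2$ the ball lies inside $P$. Polar coordinates give $|B_R|=\tfrac43\pi R^3$ and $\int_{B_R} g\,dy=\tfrac{R^4}{4}K_g$ by \eqref{eq:Kg}. Substituting $R^2=3Y$ yields \eqref{eq:Flow} at once.

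\emph{Middle and upper regimes.} Beyond $R=1/\sqrt2$ we subtract, from the ball quantities, the contributions of the parts of $B_R$ lying outside $P$. On each rhombic face $g$ is linear: it equals $\langle y,b_{ij}\rangle$ for the root whose face it is. So the removed region splits into twelve pieces, one per face cone, and on each piece the integrand is the linear function $1-\langle y, b_{ij}\rangle$.

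For $R\le\sqrt{2/3}$ each piece is a full spherical cap of height $R-1/\sqrt2$ around $u_{ij}$. Its volume and first moment along the axis are polynomials in $R$. Multiplying by twelve and adding to the low formula gives \eqref{eq:Fmid}; the constant $-\pi/\sqrt2$ comes from the $R^0$ term of the cap moments. Continuity at $Y=1/6$ is checked directly, because the cap contribution vanishes to third order there.

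For $R>\sqrt{2/3}$ the ball pokes through the edges of $P$, and each cap is clipped by the chamber walls (the planes $w_i=w_j$, i.e.\ the boundaries between face cones). We handle this by applying the divergence theorem to the clipped cap $C$. We write $\int_C 1$ and $\int_C \langle y,b\rangle$ as flux integrals of $y/3$ and of $\tfrac14\bigl(\langle y,b\rangle y\bigr)$ plus a correction term. Those flux integrals reduce to three pieces: the planar face disk clipped by two lines, the wall pieces, and the spherical cap boundary. The spherical contribution is an integral over a spherical polygon bounded by circular arcs. The planar contributions are areas and first moments of circular segments cut by lines. Every such piece is elementary: it involves areas of circular segments (arcsin, algebraic terms) and spherical-triangle areas given by Girard's formula. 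The Girard terms account for the $\arctan$ expressions and the angles $A_1,\dots,A_4$, $B_\pm$, $W$. The wall-cut arcs in regime $[2/9,1/4]$ are disjoint, so each cap has the same combinatorial type; that computation gives $\Phi_a$. Continuity at $2/9$ then fixes the additive constant $\pi/\sqrt2$ in \eqref{eq:FR4a}.

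\emph{High regime.} For $Y\ge1/4$ the arcs merge at the corner vertices, which changes the polygon type, so we recompute with the new vertex set to get $\Phi_b$ up to a constant. That constant is fixed by $F(1/3)=1$, since the ball contains $P$ there. The resulting anchor $\Phi_b(1/3)$ is evaluated by substituting $Y=1/3$, where $A_3=\arccos(1/\sqrt3)=\arctan\sqrt2$ and so on. Finally we verify symbolically that the two formulas agree at $Y=1/4$.

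\emph{Main obstacle.} The hardest part is the clipped-cap moment in the two upper regimes. This requires correct bookkeeping of the arc endpoints (where a cap circle meets a wall plane, or meets a neighbouring cap circle), the orientation of each boundary piece, and the reduction of the resulting integrals to the stated arctangents. As a cross-check, we would first derive the densities \eqref{eq:density} in each regime as arc-length integrals $\beta(\psi)$ and confirm they agree with $\Phi_a'$ and $\Phi_b'$ by symbolic differentiation.
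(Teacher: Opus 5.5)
Your proposal is correct, but beyond the low regime it takes a different route from the paper's proof.

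The paper differentiates \eqref{eq:master} and treats the density as a solid-angle integral with the caps excised. It packs the regime structure into the azimuthal measure $m(\psi)$ and the $\beta$-moments $M_0,M_1$ of Appendix~\ref{app:trig}, then integrates in $Y$ (by computer algebra in the upper regimes). The constants $-\pi/\sqrt2$, $+\pi/\sqrt2$ and the ceiling anchor are therefore fitted, by matching at $1/6$ and $2/9$ and by imposing $F(1/3)=1$.

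You never integrate in $Y$. You apply the divergence theorem to each clipped cap $C=B_R\cap\{\langle y,b_{ij}\rangle\ge1\}\cap\mathrm{cell}_{ij}$, which is the paper's Lemma~\ref{lem:ball-polytope}; the paper uses that lemma only for Theorem~\ref{thm:poly}. This gives
$F=F_{\mathrm{low}}+54\sqrt2\,Y^2I_1-24\sqrt3\,Y^{3/2}I_0+\sqrt2\,|D|$.
Here $F_{\mathrm{low}}$ is \eqref{eq:Flow} continued past $1/6$, $I_0$ and $I_1$ are the area and the $\cos\psi$-moment of the spherical region $\Omega$, and $|D|$ is the area of the face disk inside the rhombus. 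This reproduces \eqref{eq:Fmid} and \eqref{eq:FR4a} term by term.

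The gain is that nothing is fitted. The $\pi/\sqrt2$ of \eqref{eq:FR4a} is the middle-regime $-\pi/\sqrt2$ plus the term $2\sqrt2A_3$ from the four planar segments, since $A_3=\pi/2-A_2$ and $\sqrt2(B_{-}-B_{+})=-2\sqrt2A_2$. Likewise $F(1/3)=1$ is automatic because $B_1\supset P$. So your ``continuity fixes the constant'' and ``fixed by $F(1/3)=1$'' steps are checks, not inputs. In exchange, the paper's route gives Corollary~\ref{cor:density} and the derivative-jump analysis directly.

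Three details to fix when you carry your route out:
\begin{enumerate}
\item Per face cone, the disk of radius $\sqrt{R^2-1/2}$ is cut by all four rhombus edges, not two; each edge is at distance $1/\sqrt6$ from the face centre. Once the disk contains the obtuse corners at $Y=1/4$, the cut segments merge in pairs.
\item The wall pieces carry no flux, and no correction term is needed, since $\operatorname{div}(\langle y,b\rangle y)=4\langle y,b\rangle$ exactly.
\item $I_1$ is not a Girard area. It needs the boundary formula $2\int_\Omega\langle u,\omega\rangle\,d\Omega=-\int_{\partial\Omega}\langle u,\nu\rangle\,d\ell$ with $u=u_{ij}$. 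In that formula the wall-arc half-lengths $\pi/2-A_1$ are what bring in $W=A_1$, while the Girard corner angles $\pi/2-A_4$ supply $A_4$.
\end{enumerate}
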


Differentiating recovers the density in every regime; the middle and
upper pieces are recorded here because they are the working objects of
the derivation.

\begin{corollary}[Density]\label{cor:density}
With $A_1,\dots,A_4$ as in \eqref{eq:angles},
\begin{align}
 0<Y<\tfrac16:\quad
   f(Y)&=12\sqrt3\,\pi\,\sqrt Y - 9K_g\,Y,
 \label{eq:flow}\\[2pt]
 \tfrac16<Y<\tfrac29:\quad
   f(Y)&= 18\sqrt2\,\pi - 60\sqrt3\,\pi\,\sqrt Y
      + 108\sqrt2\,\pi\,Y - 9K_g\,Y,
 \label{eq:fmid}\\[2pt]
 \tfrac29<Y<\tfrac14:\quad
   f(Y)&=6\pi\bigl(3\sqrt2-10\sqrt3\,\sqrt Y+18\sqrt2\,Y\bigr)
   +36\Bigl[8\sqrt3\,\sqrt Y\,A_4 \notag\\
 &\quad -\sqrt2\bigl(3Y\arctan\sqrt2+(2+6Y)A_3\bigr)
   +6\sqrt2\,Y\,(A_2-A_1)\Bigr],
 \label{eq:fR4a}\\[2pt]
 \tfrac14<Y<\tfrac13:\quad
   f(Y)&=18\Bigl[\pi\bigl(\sqrt2-2\sqrt3\,\sqrt Y+6\sqrt2\,Y\bigr)
   +8\sqrt3\,\sqrt Y\,A_4-2\sqrt2\,(1+3Y)A_3 \notag\\
 &\quad +6\sqrt2\,Y\,(A_2-A_1)-\sqrt2\,(\kappa_0+\kappa_1 Y)\Bigr],
 \label{eq:fR4b}
\end{align}
with $\kappa_0=\arccos\tfrac13$ and
$\kappa_1=4\arccos\tfrac13-2\operatorname{arccot}\sqrt2
+6\arctan\sqrt2+6\arctan\tfrac{\sqrt2}{5}$.
The density is continuously differentiable on $(0,1/3)$ and vanishes
at the ceiling, $f(1/3^-) = 0$.  The thresholds show up one derivative
higher: $f''$ jumps at $1/6$, since
$f_{\mathrm{mid}} - f_{\mathrm{low}} = 18\sqrt2\pi(1-\sqrt{6Y})^2$ has a
double zero there; $f''$ is continuous at $2/9$ but $f'''$ diverges
like $(Y - 2/9)^{-1/2}$ from the right; and $f'''$ jumps at $1/4$.
\end{corollary}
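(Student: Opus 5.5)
The statement to prove is Corollary~\ref{cor:density}: the density in each regime, its smoothness, and the behaviour of its derivatives at the thresholds.

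The plan is to take Theorem~\ref{thm:main} as given and differentiate each of the four pieces \eqref{eq:Flow}, \eqref{eq:Fmid}, \eqref{eq:FR4a}, \eqref{eq:FR4b} in $Y$. Each piece is an explicit combination of powers of $\sqrt Y$ and inverse trigonometric functions of algebraic arguments, so its derivative is elementary.

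The low and middle regimes are immediate:
\begin{itemize}
\item differentiating $8\sqrt3\pi Y^{3/2}$ gives $12\sqrt3\pi\sqrt Y$;
\item differentiating $-\tfrac92K_gY^2$ gives $-9K_gY$;
\item the remaining terms of \eqref{eq:Fmid} follow the same way.
\end{itemize}
The identity $f_{\mathrm{mid}}-f_{\mathrm{low}}=18\sqrt2\pi-72\sqrt3\pi\sqrt Y+108\sqrt2\pi Y=18\sqrt2\pi(1-\sqrt{6Y})^2$ is checked by expanding the square. A double zero at $Y=1/6$ means that $f$ and $f'$ match there while $f''$ jumps by $18\sqrt2\pi\cdot 2\cdot(6/(2\sqrt{6Y}))^2\ne0$.

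As a cross-check, I would rederive $f$ in the low regime from \eqref{eq:density} with no excision. This gives $3R\,(4\pi-RK_g)$ with $R=\sqrt{3Y}$, which reproduces \eqref{eq:flow}.

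For the upper two regimes I differentiate term by term, using these derivatives:
\begin{itemize}
\item $\tfrac{d}{dY}\sqrt{9Y-2}=\tfrac{9}{2\sqrt{9Y-2}}$;
\item $A_3'=\dfrac{9}{(18Y-3)\sqrt{18Y-4}}$;
\item $A_4'=\dfrac{1}{3Y^2\sqrt{3-2/(3Y)}\,(4-2/(3Y))}$;
\item the analogous formulas for $A_1$, $A_2$, $W$ and $B_\pm$.
\end{itemize}
The key step is a cancellation. All terms of the form (algebraic factor) $\times$ (derivative of an angle) must combine to zero, or into the $\arctan\sqrt2$ constants. What survives is only the "product rule" part: the coefficient derivatives multiplied by the angles themselves. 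This is exactly the structure visible in \eqref{eq:fR4a} and \eqref{eq:fR4b}: polynomial-in-$\sqrt Y$ coefficients times $A_i$, plus constants.

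I would organize the cancellation by collecting everything over the common radical $\sqrt{9Y-2}$. One relation should make this manageable: $18Y-4=2(9Y-2)$ and $9Y/2-1=(9Y-2)/2$, so $A_1$, $A_2$ and $W$ all have arguments that are rational in $\sqrt{9Y-2}$. Likewise, the sum $B_+-B_-$ can be collapsed with the subtraction formula for arctangents into a single arctangent before differentiating.

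To rewrite the surviving terms, I would use $\pi/2-W=\arctan\sqrt{(9Y-2)/2}=\pi/2-A_1$, i.e.\ $W=A_1$, together with $\arcsin(1/\sqrt3)=\tfrac{\pi}{2}-\arctan\sqrt2$. In the high regime, the constants $\kappa_0$ and $\kappa_1$ collect the fixed angles $\arcsin\tfrac13$, $\arctan\tfrac{\sqrt2}{5}$ and $\arctan\sqrt2$ via $\arccos\tfrac13=\tfrac\pi2-\arcsin\tfrac13$.

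The remaining regularity claims follow from the resulting formulas.
\begin{itemize}
\item \textbf{Continuity of $f$ and $f'$ at $2/9$ and $1/4$.} I check this by evaluating the limits using the endpoint conventions for the angles, e.g.\ $A_1,A_2\to\pi/2$ at $2/9$. At $2/9$, $A_2-A_1\to0$ and $A_3\to0$.
\item \textbf{Order of contact at $2/9$.} Near $2/9$, $A_3\sim c\sqrt{Y-2/9}$, and $A_2-A_1$ and $A_4-\arctan\sqrt{3/2}\cdots$ behave similarly. The coefficients of these square-root terms cancel in $f$ to order $(Y-2/9)^{3/2}$, which should be checked by Puiseux expansion. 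Then $f'''\sim(Y-2/9)^{-1/2}$ while $f''$ is continuous.
\item \textbf{Contact at $1/4$.} Here $18Y-3=\tfrac32$, so the high-minus-upper-middle difference is analytic with a triple zero, which gives the jump in $f'''$.
\item \textbf{Ceiling.} At $Y=1/3$ we have $18Y-3=3$, $A_3=\arccos(1/\sqrt3)=\arctan\sqrt2$ and $A_4=\arctan\sqrt{1}=\pi/4$. Substituting into \eqref{eq:fR4b} should give $f(1/3^-)=0$; I would verify this with the identity $\arctan\sqrt2+\arctan\tfrac{\sqrt2}{5}\cdots$ and check numerically.
\end{itemize}

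The main obstacle is the term-by-term cancellation in the upper-middle and high regimes, especially the $B_\pm$ contributions. My first move there is to replace $B_--B_+$ by a single arctangent of an algebraic function of $\sqrt Y$ and $\sqrt{9Y-2}$. If a stubborn residual remains, the fallback is to verify the rational identity symbolically after clearing the radicals.
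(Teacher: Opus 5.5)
Your plan is correct. For the density formulas it is exactly the paper's one-line remark ("differentiating recovers the density"), but the paper's substantive argument runs the other way.

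\textbf{How the paper argues.} The upper densities are derived first, in Appendix~\ref{app:trig}, from $f=3R(4\pi-RK_g)+3R\,\mathcal E(R)$ with $\mathcal E(R)=12\int_0^{\psi_R}(\sqrt2R\cos\psi-1)\,m(\psi)\sin\psi\dd\psi$ and the three-piece cell measure $m(\psi)$. $\Phi_a,\Phi_b$ are then their antiderivatives, certified only numerically. The threshold behaviour is read off $m(\psi)$:
\begin{itemize}
\item the square-root onset of $\beta$ at $\psi=\pi/6$, integrated against a weight vanishing linearly at $\psi_R$, gives a contact of order $(Y-2/9)^{5/2}$;
\item the kink at $\operatorname{arccot}\sqrt2$ gives a cubic contact at $1/4$;
\item the tiling identity $m(\pi/4)=0$ gives $f(1/3^-)=0$;
\item positivity of $\beta$ and of the weight makes the leading coefficients nonzero without computation.
\end{itemize}

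\textbf{What your route buys.} You replace this with explicit expansions, but you get an exact symbolic proof that $\Phi_a'$ and $\Phi_b'$ equal \eqref{eq:fR4a} and \eqref{eq:fR4b}, and the cancellation you predict does occur. With $s=\sqrt{9Y-2}$ one has $W=A_1$, $A_3=\pi/2-A_2$ and $B_\pm=A_4\pm A_2$. Use the last identity rather than the arctangent subtraction formula, whose single-arctangent form jumps by $\pi$ at $Y=5/18$. The product-rule terms then reproduce both densities exactly, with $\kappa_1=9\pi-18\arctan\sqrt2$. The algebraic remainder, multiplied by $s$, is $6+108Y+(6+72Y-648Y^2)/(6Y-1)=0$ in the upper-middle regime, and exactly half of that in the high regime.

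\textbf{Three points to fix.}
\begin{enumerate}
\item At $2/9$ you have $A_4\to0$, not $\arctan\sqrt{3/2}$. More usefully, $A_3$, $A_4$ and $A_2-A_1$ are odd in $s$, while the rest of \eqref{eq:fR4a} equals $f_{\mathrm{mid}}$ identically. So the difference is exactly $36\bigl[8\sqrt3\sqrt Y A_4-\sqrt2(2+6Y)A_3+6\sqrt2Y(A_2-A_1)\bigr]$, an odd series in $s$. You must show its $s$ and $s^3$ coefficients vanish and its $s^5$ coefficient does not; the bracket is in fact $-\tfrac{2}{15}s^5+O(s^7)$. A surviving $(Y-2/9)^{3/2}$ term would make $f''$ diverge, not $f'''$. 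So "cancel to order $(Y-2/9)^{3/2}$" must mean cancellation through that order.
\item The triple zero at $1/4$ does not follow from $18Y-3=\tfrac32$. Either expand the difference to third order, or use that it equals $36R\int_{\operatorname{arccot}\sqrt2}^{\psi_R}(\sqrt2R\cos\psi-1)\bigl(4\beta(\psi)-2\arccos\tfrac13\bigr)\sin\psi\dd\psi$. Its integrand vanishes linearly at both ends, which gives the cubic contact.
\item The algebraic terms cancel to zero outright. The $\arctan\sqrt2$ constants come from the product rule acting on fixed angles such as $2\arcsin(1/\sqrt3)$, not from that cancellation.
\end{enumerate}
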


The pattern has a mechanism.  Each threshold is the first contact
of the sphere with a boundary orbit of the zonotope, and the exponent
is set by how the excised measure switches on.  At $1/6$ a cap
appears around each face normal; its area is analytic in
$\cos\psi_R$, so the correction to $F$ begins with an integer power
and shows up as a jump in $f''$.  At $2/9$ the cap circle becomes
tangent to a wall circle, and the wall cut it opens is an arc whose
half-width is an inverse cosine of the colatitude, a square-root
onset, which after the two remaining integrations leaves the
half-power divergence $(Y - 2/9)^{-1/2}$ in $f'''$.  At $1/4$ the
wall cuts of neighboring faces merge; the arc measure acquires a
kink, and $f'''$ jumps.  The single $n = 3$ bifurcation at $Y = 1/4$
is of the second kind: the variance circle touches the hexagon's
edges, the excised arcs open with half-width $\arccos(r/R)$, and the
same half power appears one derivative earlier, in $f''$ rather than
$f'''$, because at $n = 3$ the arc measure is integrated once less.  Figure~\ref{fig:cdf} plots $F$ and $f$; the
density is smooth to the eye, and the thresholds are visible only in
its curvature.

\begin{figure}[t]
\centering
\includegraphics[width=\linewidth]{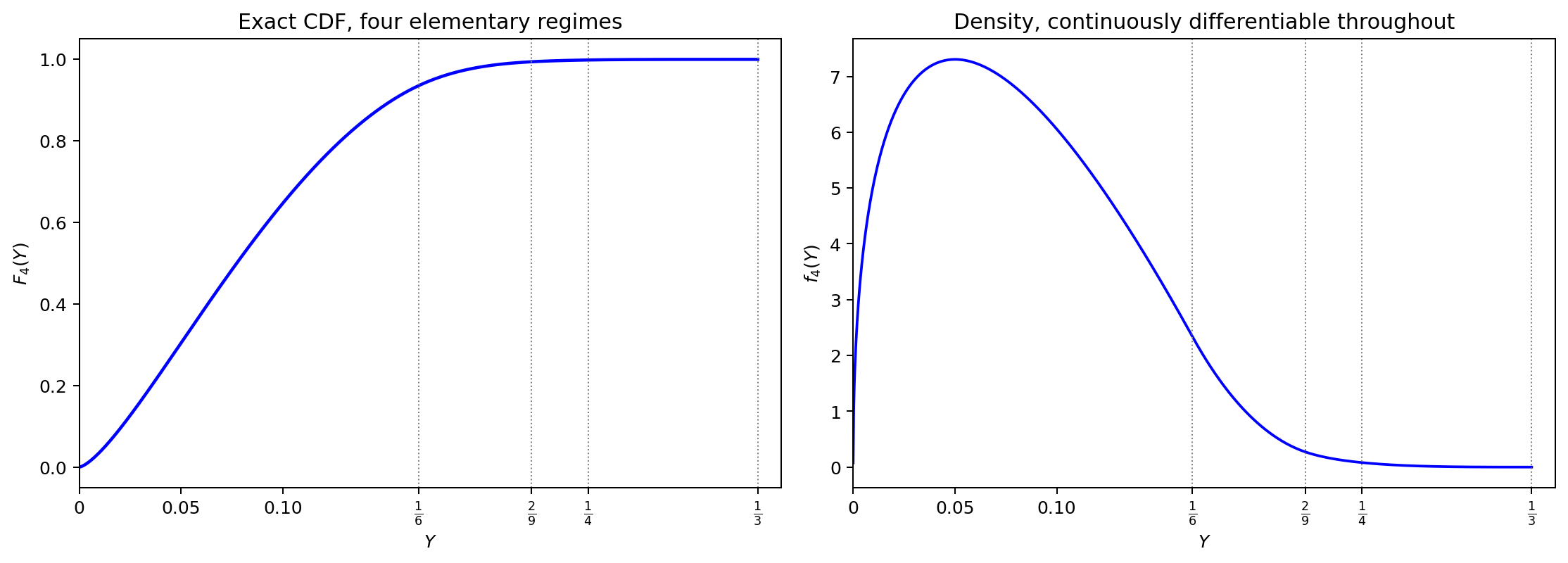}
\caption{The exact $n=4$ CDF (left) and its density (right), from the
closed forms of Theorem~\ref{thm:main}.  Dotted verticals mark the
thresholds $1/6$, $2/9$, $1/4$ and the ceiling $1/3$.}
\label{fig:cdf}
\end{figure}

\subsection{Proof of Theorem~\ref{thm:main}: low regime}\label{sec:derive-low}

The proof of Theorem~\ref{thm:main} occupies this and the next two subsections, one regime at
a time, each starting from the master integral~\eqref{eq:master}.
For $R\sqrt2 \le 1$, i.e.\ $Y \le 1/6$, the factor
$\bigl(1 - g\bigr)_+$ never clips inside the ball, because
$r\,g(\omega) \le R\sqrt2 \le 1$.  Then \eqref{eq:master} splits into
a volume term and a gauge moment:
\begin{equation}
  F(Y) = 2\Bigl[\operatorname{vol}(B_R)
     - \int_0^R r^3 \dd r \int_{S^2} g\dd\Omega\Bigr]
   = 2\Bigl[\tfrac{4\pi}{3}R^3 - \tfrac{R^4}{4}K_g\Bigr]
   = 8\sqrt3\,\pi\,Y^{3/2} - \tfrac92 K_g\,Y^2 ,
\end{equation}
using $R^3 = 3\sqrt3\,Y^{3/2}$ and $R^4 = 9Y^2$.  This is
\eqref{eq:Flow}; the entire low regime is a two-term formula, and the
small-$Y$ law $F \sim 8\sqrt3\pi Y^{3/2}$ has the universal exponent
$(n-1)/2$; it is the $n = 4$ case of Royen's thin-cylinder
asymptotic for the uniform parent \citep{royen2008}.

\begin{remark}[The Ali--Richards constant]\label{rem:AR}
Equation \eqref{eq:Flow} recovers the low-regime formula of \citet{aliRichards1975}.  Their exact values,
$b(4) = 6\sqrt2\arccos(1/\sqrt3)$ and
$b(5) = 8\sqrt{10}\bigl(\arccos(1/\sqrt3) - \pi/6\bigr)$, coincide
with ours identically---$\arccos(1/\sqrt3) = \arctan\sqrt2$---and
our derivation identifies the constant geometrically:
\begin{equation*}
  b(n) = \frac{1}{\sqrt n}\int_{S^{n-2}} g\dd\Omega,
  \qquad
  b(4) = \tfrac12 K_g = 6\sqrt2\arctan\sqrt2,
  \qquad
  b(5) = K_5/\sqrt5,
\end{equation*}
where $K_5 = 20\sqrt2\bigl(\tfrac{2\pi}{3} - \arccos\tfrac13\bigr)
= \sqrt5\,b(5)$ is Ali and Richards' constant rewritten in the
normalization of $K_g$.  The general $b(n)$ is therefore a
dimensional constant times the mean width of the convex hull of the
contrast roots $b_{ij}$.\footnote{The six-decimal table in
\citet{aliRichards1975} prints $b(4) = 8.106126$ and
$b(5) = 10.921648$; their own closed forms evaluate to
$8.106130\ldots$ and $10.921694\ldots$, so the printed decimals
carry small transcription errors while the formulas are exact.}
\end{remark}

\subsection{Proof of Theorem~\ref{thm:main}: middle regime}\label{sec:derive-mid}

For $1/6 < Y < 2/9$ the excision set $\{g \ge 1/R\}$ in
\eqref{eq:density} is the disjoint union of twelve congruent geodesic
caps centered at the roots $b_{ij}$, on which $g = \sqrt2\cos\psi$ in
the colatitude $\psi$ from the cap axis.  Permutations of the four
coordinates act on contrast space as the tetrahedral reflection group
$T_d \cong S_4$, a subgroup of the polytope's full symmetry group
$O_h$, carrying the caps onto one another isometrically, so one cap
determines all twelve.  On the sphere $r = R$ the clipping factor $\bigl(1 - r\,g(\omega)\bigr)_+$ of \eqref{eq:master} is active
where $\sqrt2\,R\cos\psi \ge 1$, so each cap has angular radius
\begin{equation}
  \psi_R = \arccos\frac{1}{\sqrt2\,R}, \qquad R = \sqrt{3Y},
  \label{eq:psiR}
\end{equation}
growing from $\psi_R = 0$ at $Y = 1/6$, where the ball first touches the faces, to
$\psi_R = \pi/6$ at $Y = 2/9$, where the caps reach the walls.  Then
\begin{equation}
  f(Y) = 3R\Bigl[\,4\pi - R\,K_g
   - 12\!\int_0^{\psi_R}\!\!
     \bigl(1-\sqrt2\,R\cos\psi\bigr)\,2\pi\sin\psi\dd\psi\Bigr],
  \label{eq:midcap}
\end{equation}
and the cap integral is elementary:
$2\pi\bigl[(1-c) - \tfrac{\sqrt2 R}{2}(1-c^2)\bigr]$ with
$c = 1/(\sqrt2 R)$.  Expanding \eqref{eq:midcap} gives exactly
\eqref{eq:fmid}; we verified the identity numerically to
$2\eee{14}$ across the regime.  Integrating and matching to
\eqref{eq:Flow} at $Y = 1/6$ fixes the constant of integration to
$-\pi/\sqrt2$ exactly, giving \eqref{eq:Fmid}.

\subsection{Proof of Theorem~\ref{thm:main}: upper regimes}\label{sec:derive-high}

Past $Y = 2/9$ the ball reaches the polytope's edges: each cap grows
until it meets the four \emph{walls} of its isotropy cell---the great
circles $\{c_i = c_k\}$ on which the identity of the extreme
coordinate pair changes.  The wall circles excise from each cap four
azimuthal arcs of half-width
\begin{equation}
  \beta(\psi) \;=\; \arccos\!\Bigl(\frac{\cot\psi}{\sqrt3}\Bigr),
  \label{eq:beta}
\end{equation}
and every resulting integral carries only the single quadratic
radical $\sqrt{3-4\cos^2\psi}$---the algebraic signature of the
$60^\circ$ wall geometry of the $A_3$ arrangement.  The arcs on
neighboring walls stay disjoint until $\psi = \operatorname{arccot}
\sqrt2$, i.e.\ until the ball reaches the corner vertices at
$Y = 1/4$, which splits the computation into the two pieces
\eqref{eq:fR4a} and \eqref{eq:fR4b}.  Carrying out the arc integrals
produces the densities of Corollary~\ref{cor:density}; they were
derived by computer algebra and match the directly assembled
spherical integral to $8\eee{14}$ at every spot-checked point.
Figure~\ref{fig:atlas-sphere} shows the mechanism in the simplest picture: seen from the
center of the projected cube, a cell is one rhombic face, its four walls are the face's edges,
and the ball pokes through the face, crosses its edges at $Y = 2/9$, passes its two near
corners at $Y = 1/4$, and reaches its two far corners at $Y = 1/3$.

\begin{figure}[t]
\centering
\includegraphics[width=\linewidth]{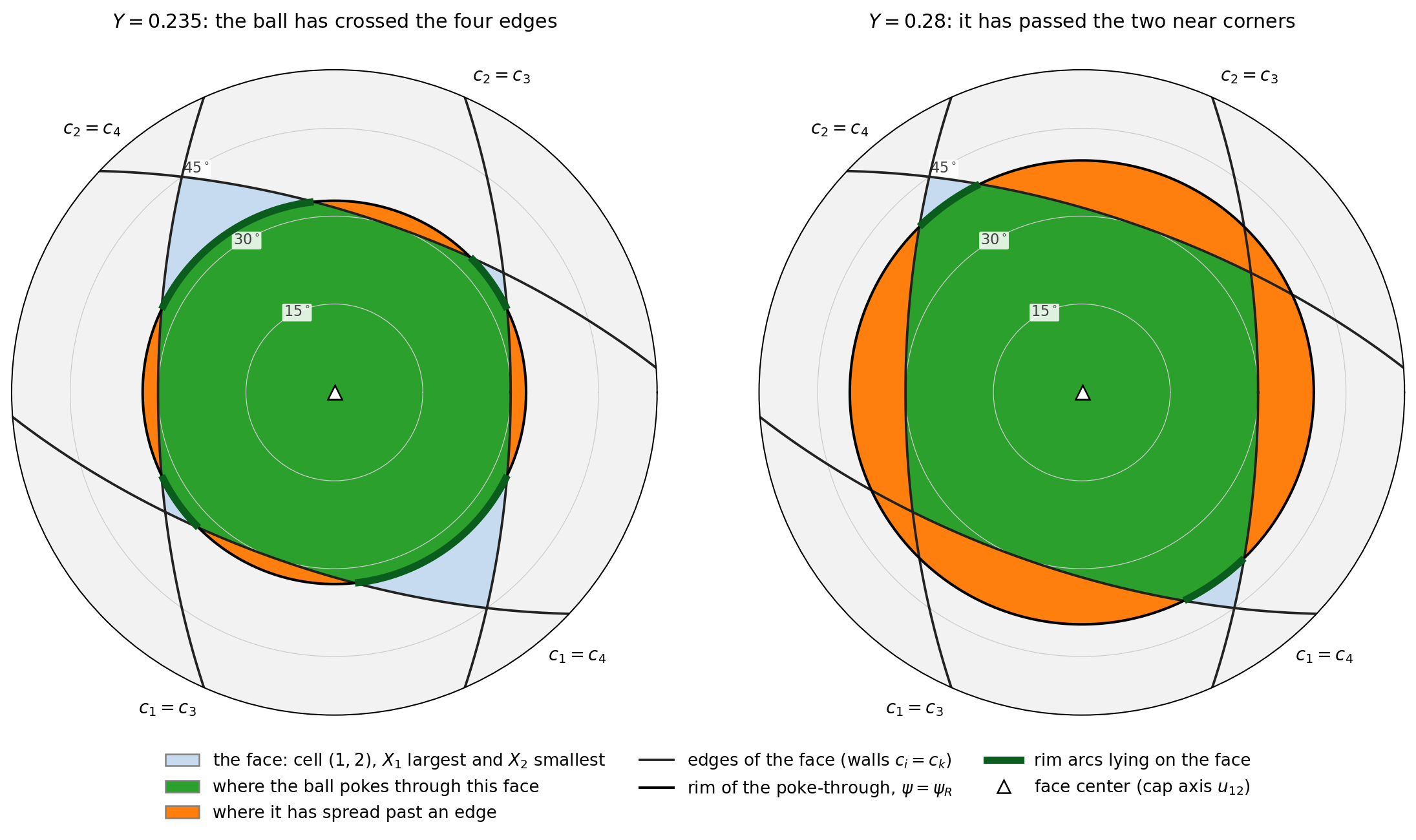}
\caption{What happens on one face as $Y$ grows.  Stand at the center of the projected cube
(the rhombic dodecahedron) and look straight at one of its twelve rhombic faces, the one whose
directions form the cell $(1,2)$: in those directions $X_1$ is the largest coordinate and $X_2$
the smallest.  Each panel is a map of directions around the face center (triangle): distance
from the center of the map is the angle $\psi$ from the face's normal $u_{12}$, and angle
around it is the azimuth $\varphi$.  The face is the blue region; its four edges are the walls
$c_i = c_k$, which look curved only because the map is drawn in angles.  The ball
$\{S^2 \le Y\}$ pokes through the face in the green region, whose rim is the circle
$\psi = \psi_R$ of \eqref{eq:psiR}; where the poke-through has spread past an edge (orange),
it is counted by the neighboring face instead.  What the cap integral needs is the part of the
rim that still lies on the face (thick dark green).  \emph{Left,} $Y = 0.235$: the ball has
crossed all four edges, which it first reaches at $Y = 2/9$, $\psi_R = 30^\circ$, but not yet
the corners, so four separate arcs of rim survive.  \emph{Right,} $Y = 0.28$: past
$Y = 1/4$ ($\psi_R = 35.26^\circ$) the ball has passed the two near corners, where pairs of
edges meet, the neighboring orange regions have merged there, and only two arcs survive.  At
$Y = 1/3$ ($\psi_R = 45^\circ$) it reaches the two far corners and the face is covered.}
\label{fig:atlas-sphere}
\end{figure}

The antiderivatives $\Phi_a, \Phi_b$ of \eqref{eq:Phia} and
\eqref{eq:Phib} were obtained by computer algebra and are certified
independently of it: differentiating them in $60$-digit arithmetic
reproduces \eqref{eq:fR4a} and \eqref{eq:fR4b} to $10^{-58}$ at
interior points; the matching constant at $Y = 2/9$ simplifies
symbolically to exactly $\pi/\sqrt2$; and the gap between
\eqref{eq:FR4a} and \eqref{eq:FR4b} at $Y = 1/4$ is symbolically zero
and below $10^{-60}$ in $60$-digit evaluation.  The high-regime
branch is anchored at the ceiling: $F(1/3) = 1$ holds by
construction, with $\Phi_b(1/3)$ in the closed form
\eqref{eq:FR4b}.  Independently, numerical integration of the density
chain from $0$ to $1/3$ returns $1$ to $5\eee{14}$, so the anchor is
consistent with the differential structure rather than imposed on it.

This completes the proof of Theorem~\ref{thm:main}.

\begin{remark}[Structure of the constants]\label{rem:constants}
The matching constants at the three joints are $-\pi/\sqrt2$,
$+\pi/\sqrt2$, and the ceiling combination
$\tfrac13 - 9\sqrt2\arctan\sqrt2 - 4\sqrt2\arctan(2\sqrt2)$.  Every
inverse-trigonometric constant in the $n = 4$ problem---%
$\arctan\sqrt2$, $\arctan(2\sqrt2)$, $\arctan(\sqrt2/5)$,
$\arccos\tfrac13$, $\arcsin\tfrac13$---is an angle of the
cuboctahedron--rhombic-dodecahedron pair, and each is a rational
combination of $\pi$ and $\arctan\sqrt2$:
$\arccos\tfrac13 = \arctan(2\sqrt2) = \pi - 2\arctan\sqrt2$,
$\arctan(\sqrt2/5) = \pi - 3\arctan\sqrt2$, and
$\arcsin\tfrac13 = 2\arctan\sqrt2 - \pi/2$.  This is the sense in
which $K_g$ is the single non-obvious constant.  In particular $F(1/4)$
evaluates to $0.9985125531\ldots$: only $1.5\eee{3}$ of probability
mass lives beyond the corner-vertex threshold.
\end{remark}

\subsection{The ceiling law}\label{sec:ceiling}

At the top of the support the tail has a simple exact leading term, which the closed form
reproduces and which extends to every polynomial parent.

\begin{corollary}[Ceiling law]\label{cor:ceiling}
Write $\delta = \tfrac13 - Y$ for the distance of the variance level below the ceiling.  As
$\delta \to 0^+$,
\begin{equation}
  1 - F\bigl(\tfrac13 - \delta\bigr) \;=\; \tfrac{81}{4}\,\delta^4
  \;+\; O(\delta^5).
  \label{eq:ceiling-law}
\end{equation}
\end{corollary}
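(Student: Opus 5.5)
Rather than expand the high-regime closed form \eqref{eq:FR4b} around $Y=1/3$, I would prove \eqref{eq:ceiling-law} directly from the geometry in the original coordinates. Afterwards I would use the density \eqref{eq:fR4b} only as a consistency check: it should satisfy $f(\tfrac13-\delta)=81\delta^3+O(\delta^4)$. The key fact is that $1-F(\tfrac13-\delta)=\Pr(3S^2>1-3\delta)$ is the volume of the part of $[0,1]^4$ where the convex quadratic $q(X)=\sum_i(X_i-\bar X)^2$ exceeds $1-3\delta$.

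\emph{Step 1 (localization).} A convex function on the cube attains its maximum at vertices, so I first evaluate $q$ at all sixteen of them. It equals $0$ at the two diagonal corners and $3/4$ at the eight corner vertices (one or three coordinates equal to $1$). It equals $1$ at the six axial vertices (exactly two coordinates equal to $1$); these are the vertices at distance $1$ in Section~\ref{sec:geometry}. By continuity and compactness, for $\delta$ small the superlevel set $\{q>1-3\delta\}$ is contained in the union of small neighbourhoods of the six axial vertices. The permutation symmetry $S_4$ acts transitively on these six vertices, so I get $1-F=6\,V(\delta)$, where $V(\delta)$ is the contribution near $(1,1,0,0)$.

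\emph{Step 2 (local expansion).} Near that vertex I write $X=(1-a,\,1-b,\,c,\,d)$ with $a,b,c,d\ge0$ small, and set $t=a+b+c+d$. A direct expansion gives $\sum X_i^2=2-2(a+b)+Q_1$ and $(\sum X_i)^2/4=1-(a+b-c-d)+Q_2$, where $Q_1,Q_2$ are quadratic forms. Hence
\[
q=1-t+Q(a,b,c,d),\qquad Q=O(t^2).
\]
The condition $q>1-3\delta$ therefore reads $t<3\delta+Q$, so the region is sandwiched between the simplices $\{t<3\delta-Ct^2\}$ and $\{t<3\delta+Ct^2\}$ in the positive orthant. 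Because $t\le 3\delta+O(\delta^2)$ on the region, these are simplices $\{t<3\delta(1\pm C'\delta)\}$. The volume of $\{a,b,c,d\ge0,\ t<s\}$ is $s^4/24$. This gives $V(\delta)=\tfrac{(3\delta)^4}{24}\bigl(1+O(\delta)\bigr)=\tfrac{81}{24}\delta^4+O(\delta^5)$. Multiplying by six yields $\tfrac{81}{4}\delta^4+O(\delta^5)$.

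\emph{Main obstacle.} The one delicate point is making the sandwich rigorous: the quadratic correction must not spoil the $\delta^4$ term. It also needs to be quantitative, so that the error is truly $O(\delta^5)$ and not merely $o(\delta^4)$. I would handle it by the two-sided simplex comparison above, with an explicit constant $C$ bounding $|Q|\le Ct^2$ on the orthant near the vertex. Localization also needs a uniform gap. On the part of the cube at distance $\ge\eta$ from all axial vertices, $q\le1-\epsilon(\eta)$, since the maximum of $q$ on that compact set is below $1$. So for $3\delta<\epsilon(\eta)$ nothing outside the six neighbourhoods contributes, and the neighbourhoods are disjoint for small $\eta$.
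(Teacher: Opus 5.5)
Your proof is correct and is essentially the paper's corner-simplex argument: localize to the six axial vertices, write $q = 1 - t + O(t^2)$ with $t = a+b+c+d$, take the simplex volume $(3\delta)^4/4!$, and multiply by six. Your two-sided sandwich and uniform-gap localization make rigorous the ``up to the quadratic remainder'' step that the paper states in one line.

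One small tightening is needed. Convexity plus the vertex values gives $\max q = 1$, but not that the maximum is attained only at the axial vertices, and your localization step needs that. It follows at once from $\sum_i X_i^2 \le \sum_i X_i = s$ on the cube, with equality iff $X \in \{0,1\}^4$, together with $s - s^2/4 = 1 - (s-2)^2/4 \le 1$, with equality iff $s = 2$.
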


\begin{proof}
The maximum $Y = 1/3$ is attained only at the six axial vertices such
as $(1,1,0,0)$.  Writing a nearby sample as $(1-\epsilon_1, 1-\epsilon_2, \epsilon_3, \epsilon_4)$ with
$\epsilon_i \ge 0$ small, $\sum_i (x_i - \bar x)^2 = 1 - (\epsilon_1+\epsilon_2+\epsilon_3+\epsilon_4)
+ O(\|\epsilon\|^2)$ exactly to first order, so the event
$\{Y > 1/3 - \delta\}$ near that vertex is the simplex
$\{\epsilon_1+\epsilon_2+\epsilon_3+\epsilon_4 < 3\delta\}$ up to the quadratic remainder, of
volume $(3\delta)^4/4!$; six vertices give $81\delta^4/4$.  The
gradient of the variance at the vertex is nonzero, so the level set is
a simplex and not an ellipsoidal shell, which is why the exponent is
$n = 4$ and not a half-integer.
\end{proof}

\citet{royen2008} obtains the leading term for the uniform parent
at every $n$ by the same corner-simplex argument, and at $n = 4$ his
formula gives exactly $81/4$; we include the proof because the same
argument yields the polynomial-parent tail below.  The closed form
\eqref{eq:FR4b} reproduces the coefficient: the ratio
$(1 - F(1/3 - \delta))/\delta^4$ evaluates to $20.5$ at
$\delta = 3\times10^{-3}$ and $20.3$ at $10^{-3}$ against
$81/4 = 20.25$, the residual being the $O(\delta)$ correction.  The same
simplex argument applies to every polynomial parent.  If $f(x) \sim c_0 x^{k_0}$ as $x \to 0$
and $f(1-\epsilon) \sim c_1 \epsilon^{k_1}$ as $\epsilon \to 0$, with $k_0, k_1 \ge 0$ the orders
of vanishing at the endpoints, the joint density near an axial vertex behaves as
$c_0^2 c_1^2 (\epsilon_1\epsilon_2)^{k_1}(\epsilon_3\epsilon_4)^{k_0}$, and the Dirichlet integral over
the simplex gives
\begin{equation}
  1 - F\bigl(\tfrac13 - \delta\bigr) \;=\;
  \frac{6\,c_0^2 c_1^2\,(k_0!)^2 (k_1!)^2}{m!}\,(3\delta)^{m} + O\bigl(\delta^{m+1}\bigr),
  \qquad m = 4 + 2k_0 + 2k_1 .
  \label{eq:ceiling-poly}
\end{equation}
The uniform parent ($k_0 = k_1 = 0$, $c_0 = c_1 = 1$) recovers $81/4$; Beta$(2,1)$ ($f = 2x$)
and Beta$(2,2)$ ($f = 6x(1-x)$) give the leading tails $\tfrac{486}{5}\delta^6$ and
$\tfrac{177147}{140}\delta^8$ quoted below; and a Beta$(a,b)$ parent, polynomial or not, has
the exponent $2a + 2b$ by the same computation with $k_0 = a - 1$, $k_1 = b - 1$.  Near the ceiling the
tail should be computed from this expansion rather than as $1 - F$: in
double precision the closed forms for the Beta$(2,1)$ and
Beta$(2,2)$ parents evaluate to $1 + O(10^{-13})$ at
$Y = 1/3 - 10^{-3}$, where the true tails, $\tfrac{486}{5}\delta^6$
and $\tfrac{177147}{140}\delta^8$ to leading order, are of order
$10^{-17}$ and $10^{-21}$.

\section{Validation}\label{sec:validation}

The closed form was checked against every independent computation of
$F$ available to us.

\paragraph{Deterministic quadrature.}
An independent tensor Gauss--Legendre quadrature of
$F(Y)=\operatorname{vol}\{X\in[0,1]^4:\sum_i(X_i-\bar X)^2\le 3Y\}$
in the original coordinates, with $480$ nodes per coordinate and the
$X_4$-integration done exactly, agrees with Theorem~\ref{thm:main} to
$\le 1.3\eee{7}$ at every tabulated $Y$, within its own
grid-refinement change of at most $2.7\eee{7}$, and to
$\le 2\eee{10}$ at $Y = 0.235$, $0.26$ and $0.30$ in the two upper
regimes:

\begin{center}
\renewcommand{\arraystretch}{1.15}
\begin{tabular}{lcccccc}
\toprule
$Y$ & $0.04$ & $0.10$ & $1/6^-$ & $0.18$ & $0.20$ & $2/9^-$ \\
\midrule
$F$ closed form  & $0.2315212$ & $0.6470252$ & $0.9353894$ &
  $0.9611738$ & $0.9836197$ & $0.9941677$ \\
$F$ quadrature & $0.2315213$ & $0.6470251$ & $0.9353894$ &
  $0.9611738$ & $0.9836197$ & $0.9941677$ \\
$|\mathrm{diff}|$ & $9.6\eee8$ & $1.3\eee7$ & $9.3\eee9$ &
  $1.7\eee9$ & $8.5\eee{10}$ & $1.1\eee9$ \\
\bottomrule
\end{tabular}
\end{center}

\paragraph{Monte Carlo.}
At $N = 2\times10^7$ uniform draws the assembled explicit CDF agrees
with the empirical CDF in the upper regimes to
$|\Delta| \le 4.1\eee{6}$, within binomial noise at these
probabilities.

\paragraph{Internal exactness checks.}
The density chain \eqref{eq:flow}--\eqref{eq:fR4b} is continuous at
the three thresholds under numerical evaluation with gaps
$\le 3\eee{8}$ set by evaluation offsets, vanishes at the ceiling
with $f(1/3^-) = -4.8\eee{14}$, and integrates to
$F(1/3) = 1 + 5.7\eee{14}$.  The explicit-form joint at $Y = 1/4$ and
the matching constant $\pi/\sqrt2$ at $Y = 2/9$ are symbolic
identities, not numerical fits.  The first two moments of $S^2$
computed from Theorem~\ref{thm:main},
$\int_0^{1/3}(1-F)\dd Y$ and $\int_0^{1/3}2Y(1-F)\dd Y$, equal the
exact values $1/12$ and $41/4320$ to $30$ digits.

\section{Polynomial parents inherit the closure}\label{sec:poly}

The companion $n = 3$ paper \citep{n3paper} proves a closure theorem:
every polynomial parent density on $[0,1]$ yields an elementary
three-sample variance CDF.  The $n = 4$ analog holds in the same
generality.

\begin{theorem}[Polynomial closure at $n = 4$]\label{thm:poly}
Let $f$ be any polynomial probability density on $[0,1]$ and let
$F(Y) = \Pr(S^2 \le Y)$ for four i.i.d.\ observations from $f$.
Then $F$ is elementary on each of the four regimes
$[0,\tfrac16]$, $[\tfrac16,\tfrac29]$, $[\tfrac29,\tfrac14]$ and
$[\tfrac14,\tfrac13]$ of the uniform case, and the proof gives a
finite construction of it.
\end{theorem}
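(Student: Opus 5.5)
The plan is to redo the master-integral reduction of Section~\ref{sec:coords} with a general polynomial weight, and then to show that every resulting integral is a polynomial moment of a region bounded by planes and one sphere. Such moments reduce, by two applications of Stokes' theorem, to integrals of polynomial one-forms along straight segments and circular arcs. Working with $F(Y)$ directly as a volume, rather than with the density, avoids any integration in $Y$: everything will be evaluated at fixed $R=\sqrt{3Y}$.

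\textbf{Step 1 (reduction to polynomial moments).} Write $X=\mu\one+E^{\!\top}y$ as before, so that
\[
F(Y)=2\int_{\|y\|\le R}\int \prod_i f(\mu+w_i)\,\dd\mu\,\dd y .
\]
Split $\RR^3$ into the $24$ chambers of the braid arrangement. On a chamber the indices of $\max_i w_i$ and $\min_i w_i$ are fixed, so these are linear functions of $y$. The $\mu$-integral over $[-\min_i w_i,\,1-\max_i w_i]$ of a polynomial in $\mu$ and $w$ is therefore a polynomial $P_\sigma(y)$ on chamber $\sigma$, supported where $g\le1$. Hence
\[
F(Y)=2\sum_\sigma\int_{D_\sigma(R)}P_\sigma\,\dd y,
\]
where $D_\sigma(R)$ is the ball $B_R$ intersected with the rhombic dodecahedron $\{g\le 1\}$ and with the chamber $\sigma$. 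Each $D_\sigma(R)$ is a convex body bounded by finitely many planes (chamber walls and faces of the rhombic dodecahedron) and by the sphere $\|y\|=R$. Its combinatorial type is constant on each of the four regimes, by the boundary-orbit analysis of Section~\ref{sec:geometry}.

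\textbf{Step 2 (divergence theorem).} For a homogeneous component $q$ of degree $k$, use $\operatorname{div}(y\,q)=(k+3)q$. This gives
\[
\int_{D}q\,\dd y=\frac{1}{k+3}\int_{\partial D}(y\cdot n)\,q\,\dd S .
\]
On a flat face, $y\cdot n$ is the constant distance of the face from the origin, and the face is a planar region bounded by segments and circular arcs (its intersection with the ball). A polynomial moment of such a planar region equals, by Green's theorem with a polynomial primitive, a sum of line integrals of polynomials along segments and arcs. On the spherical part of the boundary, $y\cdot n=R$, so we need $\int_\Omega q\,\dd S$ over a spherical region $\Omega\subset S^2_R$. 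This $\Omega$ is bounded by arcs of great circles (walls) and of small circles (the cap rims $\{\sqrt2\,R\cos\psi=1\}$).

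\textbf{Step 3 (the spherical integrals; the main obstacle).} This is where elementarity could fail. Integrating in polar coordinates about a cap axis produces $\sqrt{1+a^2\cos^2\varphi}$-type integrands, which carry a real risk of elliptic integrals, so I would avoid coordinates altogether. Decompose $q|_{S^2}$ into spherical harmonics, $q=c_0+\sum_{l\ge1}h_l$, with each $h_l$ a polynomial. Since $\Delta_S h_l=-l(l+1)h_l$, setting $u=-\sum_l h_l/(l(l+1))$ gives a polynomial with $q-c_0=\Delta_S u$. Then $(q-c_0)\,\dd\Omega=d(\star du)$, where $\star du$ is the restriction of a polynomial one-form on $\RR^3$. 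Stokes' theorem turns $\int_\Omega(q-c_0)$ into $\oint_{\partial\Omega}\star du$.

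\textbf{Step 4 (evaluating the arc integrals).} Each boundary arc lies on a circle, which I parametrize by its own angle $t$ as $y(t)=c+\rho(\cos t\,e_1+\sin t\,e_2)$. A polynomial one-form then pulls back to a trigonometric polynomial in $t$, whose integral is elementary between the endpoint angles. The constant term $c_0\,\mathrm{area}(\Omega)$ is given by Gauss--Bonnet as $2\pi$ minus the exterior angles minus the sum of geodesic curvature times length over the small-circle arcs, which is also elementary.

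\textbf{Step 5 (endpoints and conclusion).} The endpoints are intersections of two of the circles, or of a circle with a line in a face plane. They solve quadratic equations, so their coordinates are algebraic in $\sqrt Y$, and the endpoint angles are inverse trigonometric functions of algebraic arguments, as with $\beta(\psi)$ in \eqref{eq:beta}. Summing over the finitely many faces, arcs and chambers gives $F$ elementary on each regime. Because the arrangement of circles and lines is combinatorially constant inside each of $[0,\tfrac16]$, $[\tfrac16,\tfrac29]$, $[\tfrac29,\tfrac14]$ and $[\tfrac14,\tfrac13]$, a single formula holds on each regime. The construction is finite: list the arcs, pull back the one-forms, and integrate trigonometric polynomials.

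I expect Step~3 to be the crux. If the harmonic inversion route proved awkward to make explicit, I would fall back on the same Green's-theorem device used for the flat faces, applied on the sphere with a primitive chosen so that every boundary arc is integrated in its own circle parameter.
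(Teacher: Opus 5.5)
Your architecture is the paper's. The split into chamber pieces of the clipped ball, the identity $\operatorname{div}(y\,q)=(k+3)q$ turning each moment into a spherical term plus facet terms, Green's theorem on the facet pieces, Gauss--Bonnet for the area, and arc integrals in each circle's own parameter are the content of Lemma~\ref{lem:ball-polytope}. The one genuinely different ingredient is Step~3. You invert the Laplace--Beltrami operator on the harmonic components of $q$ and apply Stokes once, whereas the paper iterates the Arvo-type recursion \eqref{eq:div-recursion-general} on products of linear forms, lowering the degree by two per step. Your version is sound: on an arc of $n\cdot\omega=c$ the boundary form is $\pm\nabla u\cdot(n-c\,\omega)\dd t$, a trigonometric polynomial in $t$. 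It also shows at a glance that only the degree-zero harmonic leaves an area term. The recursion needs no harmonic decomposition and works directly in the product form in which the slab weight arrives.

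What is not justified is the claim that the combinatorial type of $D_\sigma(R)$ is constant on each regime ``by the boundary-orbit analysis of Section~\ref{sec:geometry}''. That section lists only the critical radii of $Z=\{g\le 1\}$. The polytope $C_\sigma\cap Z$ has faces that are not faces of $Z$. In the ordered chamber, the wall $w_2=w_3$ cuts the rhombic face $w_1-w_4=1$ along its short diagonal, which becomes an edge of $C_\sigma\cap Z$. Nothing you cite rules out the sphere becoming tangent to that line at a radius inside a regime, which would force an extra breakpoint.

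The paper closes this with a reflection argument. The transposition whose mirror is the wall preserves the face, so the foot of the perpendicular onto the face plane, the face center, lies on the wall. It is therefore also the foot onto the diagonal, at distance $1/\sqrt2$. There are two other fixes:
\begin{itemize}
\item Check it directly: $C_\sigma\cap Z$ is a tetrahedron with apex $0$ whose other vertices are two corner vertices and one axial vertex of $Z$. Its only non-$Z$ edge is that diagonal, which passes through the face center.
\item Avoid the issue by cutting into the twelve cells of fixed $(\arg\max_i w_i,\arg\min_i w_i)$ instead of the $24$ chambers. The integrand $\prod_i f(\mu+w_i)$ is symmetric and the slab limits involve only $\max_i w_i$ and $\min_i w_i$, so the weight is already polynomial on each cell. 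Each piece is then a pyramid over one rhombic face, whose critical radii are exactly the four boundary-orbit distances, and your citation becomes correct.
\end{itemize}

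There are two smaller points. First, your Gauss--Bonnet formula takes $\chi(\Omega)=1$. For $\tfrac29<Y<\tfrac14$ the cap crosses the two chamber sides lying under rhombus edges at interior points, while both corner vertices are still outside it. The spherical region therefore splits into three islands per chamber; these are the chamber pieces of the fourteen islands of Remark~\ref{rem:roots}. You need $2\pi\chi$ with $\chi=3$ there, or Gauss--Bonnet applied component by component; as written the area is undercounted by $4\pi$. Second, to have one formula on each \emph{closed} interval, you should note that $F$ is continuous at the thresholds, since the weight is bounded and the domain moves continuously.
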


The unit interval carries no loss of generality.  If $X = A + (B-A)U$ with $U$ supported on
$[0,1]$, then $S_X^2 = (B-A)^2 S_U^2$, so $\Pr(S_X^2 \le y) = F_U\bigl(y/(B-A)^2\bigr)$, and a
polynomial density on $[A,B]$ is carried to a polynomial density on $[0,1]$.  Every statement
below for parents on $[0,1]$ therefore holds on any bounded interval, with the thresholds scaled
by $(B-A)^2$.

The proof is given in Section~\ref{sec:poly-proof}, once the
mean-slab weight has been set up.  It does not integrate in the
radius at all: the divergence theorem converts every polynomial
moment of the ball clipped by the support polytope into boundary
integrals, which are trigonometric polynomials along circular arcs,
areas of spherical regions bounded by such arcs, and polynomial
integrals over planar regions bounded by segments and arcs.

Besides the proof, the rest of this section supplies
\emph{constructive} content: an algorithm carried out in full for two
Beta parents on the entire support.  The regime
thresholds are parent-independent because they are properties of the
support zonotope, not of the weight.

\subsection{Structure}

For a polynomial parent $f$ the mean-slab weight
\begin{equation}
  \rho(y) \;=\; 2\int_{-\min_i w_i}^{\,1-\max_i w_i}
     \prod_{i=1}^4 f(\mu + w_i)\dd\mu,
  \qquad w = E^{\!\top}y,
\end{equation}
is, on each of the $24$ ordering chambers of contrast space, a
polynomial in $w$---the integrand is a polynomial in $\mu$ and the
$w_i$, and the slab endpoints are linear in $\max_i w_i$ and
$\min_i w_i$.  In the low regime $Y \le 1/6$ the ball
$\{\|y\| \le R\}$ stays inside the support zonotope, so no clipping
occurs and the CDF is the \emph{finite} sum
\begin{equation}
  F(Y) \;=\; \sum_{d} \mathcal{A}_d\,\frac{R^{\,d+3}}{d+3},
  \qquad R = \sqrt{3Y},
  \qquad
  \mathcal{A}_d = \int_{S^2} \rho_d(\omega)\dd\Omega
      = 24\int_{T} \rho_d(\omega)\dd\Omega,
  \label{eq:poly-structure}
\end{equation}
where $\rho_d$ is the degree-$d$ homogeneous part of $\rho$ evaluated
on sorted contrasts and $T$ is the ordered chamber, a spherical
triangle of area $\pi/6$.  For $Y > 1/6$ the ball crosses the zonotope and the cap and
wall-cut integrals of Sections~\ref{sec:derive-mid} and~\ref{sec:derive-high} return with
polynomial weights; Theorem~\ref{thm:poly} covers all four regimes (proof in
Section~\ref{sec:poly-proof}), and Section~\ref{sec:poly-upper} gives the full-support forms.
Each $\mathcal{A}_d$ is a polynomial moment over a
spherical triangle, and such moments are computable \emph{exactly} by
the spherical divergence recursion: for unit-sphere moments of
products of linear forms,
\begin{equation}
  (d+1)\!\int_T \prod_{i=1}^{d}(a_i\!\cdot\!\omega)\dd\Omega
  = \sum_{j=2}^{d}(a_1\!\cdot\!a_j)\!\int_T\prod_{i\ge2,\,i\ne j}\!
    (a_i\!\cdot\!\omega)\dd\Omega
  \;-\; \sum_{e}\,(a_1\!\cdot\!n_e)\!\int_{e}\prod_{i\ge2}
    (a_i\!\cdot\!\omega)\dd\ell,
  \label{eq:div-recursion}
\end{equation}
with $n_e$ the constant outward conormal of geodesic edge $e$; the
edge integrals are elementary trigonometric integrals over arcs of
length $\arccos$ of algebraic numbers.  The recursion is a special case
of Arvo's recurrence for polynomial moments over regions of the
sphere \citep{arvo1995}, which also allows a varying boundary conormal,
as in \eqref{eq:div-recursion-general} below.  Analytic integration of
spherical harmonics over spherical polygons \citep{wangRamamoorthi2018,belcour2018} treats fixed polygonal domains; the domains here are
chambers clipped by small circles that move with $R$.  Reducing such angular
moments to Gaussian integrals is likewise anticipated
in \citet{aliRichards1975}, where $\Gamma(n/2)\,b(n)$ is evaluated
as a Gaussian moment over the ordered chamber.  For the two parents
worked out below the moments show a parity pattern: even-degree
$\mathcal{A}_d$ lie in $\mathbb{Q}\,\pi + \mathbb{Q}\sqrt3$, the numbers $a\pi + b\sqrt3$
with $a, b$ rational, and odd-degree $\mathcal{A}_d$ in
$\mathbb{Q} + \mathbb{Q}\,\sqrt2\arctan\sqrt2$, the numbers $a + b\sqrt2\arctan\sqrt2$, the latter carrying
the cuboctahedral constant $K_g$ that governs the uniform case.  The
pattern is a property of these assembled slab-weight moments, not of
chamber moments in general: with $c_1 \ge c_2 \ge c_3 \ge c_4$ the
sorted contrast coordinates of $\omega \in T$,
$24\int_T c_1 c_4 \dd\Omega = -\pi/3 - 4$, an even-degree moment
outside $\mathbb{Q}\,\pi + \mathbb{Q}\sqrt3$.  For a parent whose
coefficients are not rational the coefficients themselves enter the
constants, so any statement about the constant field of $F$ refers to
a stated parent; for the two parents below every constant lies in
$\mathbb{Q}(\pi,\sqrt2,\sqrt3,\arctan\sqrt2)$, the numbers built from these four constants and
the rationals by the arithmetic operations, certified symbolically
in Section~\ref{sec:poly-upper}.

\subsection{Proof of Theorem~\ref{thm:poly}}\label{sec:poly-proof}

Write $Z = \{y : g(E^{\!\top}y) \le 1\}$ for the support zonotope and,
for each of the $24$ ordering chambers $C_\sigma$ of contrast space,
$P_\sigma = C_\sigma \cap Z$, a convex polytope with a vertex at the
origin.  On $P_\sigma$ the weight $\rho$ is a polynomial $p_\sigma$,
as noted above, and $\rho = 0$ outside $Z$, so
\begin{equation}
  F(Y) \;=\; \sum_\sigma \int_{P_\sigma \cap B_R} p_\sigma(y)\dd y,
  \qquad R = \sqrt{3Y}.
  \label{eq:F-polytopes}
\end{equation}
Splitting each $p_\sigma$ into homogeneous parts reduces the theorem
to the following lemma, applied to finitely many pairs.

\begin{lemma}\label{lem:ball-polytope}
Let $P \subset \RR^3$ be a convex polytope with facets $\Phi_j$,
outward unit normals $n_j$ and offsets $h_j$, so that
$P = \bigcap_j \{x : n_j\cdot x \le h_j\}$, and let $p$ be a
homogeneous polynomial of degree $d$.  Put
$B_R = \{\|x\| \le R\}$ and
$\Omega_R = \{\omega \in S^2 : R\omega \in P\}$.  Then
\begin{equation}
  \int_{P \cap B_R} p \dd x
  \;=\; \frac{1}{d+3}\Bigl[\, R^{d+3}\int_{\Omega_R} p(\omega)\dd\Omega
    \;+\; \sum_j h_j \int_{\Phi_j \cap B_R} p \dd A \Bigr],
  \label{eq:ball-polytope}
\end{equation}
and on every open interval of $R$ that contains none of the
distances from the origin to the vertices of $P$ and to the feet of
the perpendiculars from the origin onto its edges and facets, the
right-hand side is a finite sum of products of algebraic functions
of $R$ and inverse trigonometric functions of algebraic functions
of $R$.
\end{lemma}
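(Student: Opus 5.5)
The identity \eqref{eq:ball-polytope} comes from the divergence theorem, and the elementarity claim from a case analysis of the boundary pieces.

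\textbf{Step 1 (the identity).} For homogeneous $p$ of degree $d$, Euler's relation gives $\nabla\cdot(p\,x) = (d+3)\,p$. Apply the divergence theorem to the field $p\,x$ on the Lipschitz domain $P\cap B_R$. Its boundary has two parts.
\begin{itemize}
\item The spherical part is $\{R\omega:\omega\in\Omega_R\}$. There $x\cdot\nu = R$ and $dA = R^2\dd\Omega$, and $p(R\omega) = R^d p(\omega)$, so this part contributes $R^{d+3}\int_{\Omega_R} p\dd\Omega$.
\item The facet parts are $\Phi_j\cap B_R$. There $x\cdot n_j = h_j$, so each contributes $h_j\int_{\Phi_j\cap B_R} p\dd A$.
\end{itemize}
The lower-dimensional pieces (edges, circles) have measure zero. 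Dividing by $d+3$ gives \eqref{eq:ball-polytope}.

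\textbf{Step 2 (facet terms).} Each facet term is a polynomial integral over a planar region. Fix facet $\Phi_j$ and put coordinates $(s,t)$ on its plane, centred at the foot of the perpendicular. Then $\Phi_j\cap B_R$ is the convex polygon $\Phi_j$ intersected with the disk of radius $\rho_j=\sqrt{R^2-h_j^2}$, and $p$ restricts to a polynomial in $(s,t)$.

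Apply the planar divergence theorem again, degree by degree: for homogeneous $q$ of degree $k$, $\nabla\cdot(q\,(s,t))=(k+2)q$. This reduces the area integral to integrals along two kinds of boundary piece.
\begin{itemize}
\item \emph{Line segments.} The integrand is a polynomial in the arc parameter. The endpoints are either polygon vertices or points where an edge line meets the circle; the latter are algebraic in $R$ (square roots of $R^2-\text{const}$).
\item \emph{Circular arcs} of radius $\rho_j$. The integrand is a polynomial in $\rho_j\cos\theta$ and $\rho_j\sin\theta$. Its antiderivative is a trigonometric polynomial plus a multiple of $\theta$, evaluated at endpoint angles $\theta=\arctan(\text{algebraic})$.
\end{itemize}
Both kinds are therefore algebraic in $R$ combined with inverse trigonometric functions of algebraic functions of $R$.

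\textbf{Step 3 (spherical term).} The region $\Omega_R$ is a spherical polygon $\Omega_\infty$, namely the radial projection of the cone at the origin, intersected with the small circles $\{n_j\cdot\omega\le h_j/R\}$. Its boundary therefore consists of great-circle arcs from the chamber walls and small-circle arcs from the facets.

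Use the spherical divergence recursion \eqref{eq:div-recursion}, generalized to allow a varying conormal on the small-circle arcs (Arvo's recurrence). By induction on the degree, it lowers the degree until only two kinds of quantity remain.
\begin{itemize}
\item \emph{The area of $\Omega_R$.} By Gauss--Bonnet this equals $2\pi$ minus the sum of exterior angles minus the integrated geodesic curvature of the boundary. The curvature of a small circle of colatitude $\alpha$ is constant, $\cot\alpha$, so its contribution is the arc length times an algebraic function of $R$. The arc lengths are arccosines of algebraic quantities, and the corner angles between circles are also arccosines of algebraic quantities in $R$.
\item \emph{Arc integrals of polynomials.} On each arc, parametrize by azimuth about its axis: $\omega=\cos\alpha\,a+\sin\alpha(\cos\varphi\,e_1+\sin\varphi\,e_2)$ with $\cos\alpha=h_j/R$. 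The integrand becomes a trigonometric polynomial in $\varphi$ with algebraic coefficients, integrated between algebraic arctangent limits.
\end{itemize}

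\textbf{Step 4 (regularity of the combinatorics).} This is the main obstacle: it must be checked that the combinatorics of all these pieces is constant on each open interval avoiding the listed critical distances. The combinatorial type of $P\cap B_R$ changes only when the sphere of radius $R$ becomes tangent to a face of $P$ of some dimension. Such a tangency can occur only at:
\begin{itemize}
\item a vertex, at distance $\|v\|$;
\item an edge, at the foot of the perpendicular onto its line;
\item a facet, at the foot of the perpendicular onto its plane.
\end{itemize}
Should such a foot fall outside the face, it contributes a harmless extra breakpoint. Away from these radii, the intersection points and angles are given by fixed algebraic formulas (solving a quadratic with nonvanishing discriminant), with branches chosen consistently, and the set of arcs is fixed. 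I expect the delicate part to be stating this transversality cleanly, especially for degenerate facets through the origin: there $h_j=0$, so the facet term vanishes but the facet still bounds $\Omega_R$ as a great-circle arc. It is simplest to argue via a stratification by the face lattice.
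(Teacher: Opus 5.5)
Your proposal follows the paper's proof essentially step for step: the divergence identity for $x\,p$, the spherical recursion closed by Gauss--Bonnet and by trigonometric integrals along circle arcs, a planar reduction of the facet terms (your Euler field centred at the foot of the perpendicular is a harmless variant of the paper's Green's-theorem primitive $Q$), and constancy of the combinatorics between the excluded radii. One repair is needed: Gauss--Bonnet must read $\operatorname{area}(\Omega_R)=2\pi\chi(\Omega_R)-\sum\theta_{\mathrm{ext}}-\int_{\partial\Omega_R}\kappa_g\dd\ell$ rather than with $2\pi$, because $\Omega_R$ need not be a disc (it is all of $S^2$ when the origin is interior and $R$ is small, and each chamber piece $P_\sigma$ splits into three islands for $\tfrac29<Y<\tfrac14$), with $\chi$ constant on each interval by your Step~4; this changes only a constant and leaves the elementarity conclusion intact.
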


\begin{proof}
Since $p$ is homogeneous of degree $d$,
$\operatorname{div}(x\,p) = 3p + x\cdot\nabla p = (d+3)\,p$, the homogeneity identity behind
the facet reduction of polynomial integrals over polytopes \citep{lasserre1998}.  Apply
the divergence theorem on the Lipschitz domain $P \cap B_R$; the spherical part of its boundary
is what the ball adds.  On the
spherical part of its boundary, $x = R\omega$ with
$\omega \in \Omega_R$, the outward normal is $\omega$,
$x\cdot\omega = R$, $\dd A = R^2\dd\Omega$ and
$p(x) = R^d p(\omega)$, which gives the first term; on
$\Phi_j \cap B_R$ the outward normal is $n_j$ and $x\cdot n_j = h_j$,
which gives the second.  No integration in $R$ is involved, and
facets through the origin, with $h_j = 0$, do not contribute.

\emph{Combinatorics.}  $\Omega_R$ is the intersection of $S^2$ with
the half-spaces $n_j\cdot\omega \le h_j/R$.  Its boundary consists of
arcs of the circles $n_j\cdot\omega = h_j/R$, great circles when
$h_j = 0$ and small circles otherwise, and its corners are the points
$R^{-1}(e \cap \partial B_R)$ for edges $e$ of $P$.  A point of
$e \cap \partial B_R$ solves a quadratic equation along $e$, so its
coordinates are algebraic functions of $R$.  The arrangement of arcs,
corners and components of $\Omega_R$, and the combinatorial type of
each planar region $\Phi_j \cap B_R$, change only when $\partial B_R$
passes a vertex of $P$ or becomes tangent to the line of an edge or
the plane of a facet, that is, at the excluded radii.

\emph{Spherical term.}  Write $p$ as a combination of products of
linear forms.  For a product $\prod_{i=1}^k (a_i\cdot\omega)$ and a
region $\Omega \subset S^2$ with piecewise smooth boundary and
outward conormal $\nu$, the divergence theorem on the sphere, applied
to the tangential field
$\bigl(a_1 - (a_1\cdot\omega)\,\omega\bigr)\prod_{i\ge2}(a_i\cdot\omega)$,
gives
\begin{equation}
  (k+1)\!\int_\Omega \prod_{i=1}^{k}(a_i\!\cdot\!\omega)\dd\Omega
  = \sum_{j=2}^{k}(a_1\!\cdot\!a_j)\!\int_\Omega
    \prod_{i\ge2,\,i\ne j}(a_i\!\cdot\!\omega)\dd\Omega
  \;-\; \int_{\partial\Omega}(a_1\!\cdot\!\nu)
    \prod_{i\ge2}(a_i\!\cdot\!\omega)\dd\ell,
  \label{eq:div-recursion-general}
\end{equation}
which is \eqref{eq:div-recursion} with a conormal that may vary along
the boundary, a case covered by Arvo's recurrence \citep{arvo1995}.  Each application lowers the degree of the interior
integrand by two, so the recursion ends in boundary integrals and,
for even $k$, in the area of $\Omega$.  On an arc of the circle
$n\cdot\omega = c$ with $|c| < 1$, parametrize
$\omega(t) = c\,n + s\,(u\cos t + v\sin t)$ with $s = \sqrt{1-c^2}$
and $(n,u,v)$ orthonormal.  Then $\dd\ell = s\dd t$ and
$\nu = \pm(n - c\,\omega)/s$, so
$(a_1\cdot\nu)\dd\ell = \pm(a_1\cdot n - c\,a_1\cdot\omega)\dd t$: the
factor $s$ cancels, and the boundary integrand is a trigonometric
polynomial in $t$ with coefficients polynomial in $c$ and $s$.  Its
integral between two corners is a polynomial in their parameters and
in the sines and cosines of those parameters, which are algebraic
functions of $R$, as is $c = h_j/R$.  The area is given by the
Gauss--Bonnet formula,
\[
  \operatorname{area}(\Omega) \;=\; 2\pi\chi(\Omega)
  \;-\; \sum_{\text{corners}}\theta_{\mathrm{ext}}
  \;-\; \int_{\partial\Omega}\kappa_g\dd\ell ,
\]
in which each exterior corner angle is the arccosine of an inner
product of two conormals, an algebraic function of $R$, and the
geodesic curvature is constant along each arc, equal to $\pm c/s$, so
that the last integral is a sum of terms $\pm c\,\Delta t$.  The
Euler characteristic of $\Omega_R$ is constant on the interval.

\emph{Planar terms.}  $\Phi_j \cap B_R$ is the facet polygon
intersected with the disk of radius $\sqrt{R^2 - h_j^2}$ centered at
$h_j n_j$.  In coordinates $(z_1, z_2)$ on the facet plane, $p$
restricts to a polynomial $q$; with a polynomial $Q$ satisfying
$\partial Q/\partial z_1 = q$, Green's theorem gives
$\int q \dd A = \oint Q \dd z_2$.  On a straight boundary segment
this is a polynomial in the endpoint coordinates, which are algebraic
in $R$; on a circular arc $z = z_0 + r(\cos t, \sin t)$ it is the
integral of a trigonometric polynomial, handled as above.
\end{proof}

\begin{proof}[Proof of Theorem~\ref{thm:poly}]
By \eqref{eq:F-polytopes} and the lemma, applied to each homogeneous
part of each $p_\sigma$, $F$ is elementary in $R = \sqrt{3Y}$, hence
in $Y$, on every interval that avoids the radii the lemma excludes
for the polytopes $P_\sigma$.  Every face of $P_\sigma$ lies in a
face of $Z$, in a chamber wall, or in the intersection of a wall with
a face of $Z$, and the facets lying in chamber walls pass through the
origin (the intersection of a wall with a face of $Z$ need not).
Each wall is the mirror of a coordinate transposition, which
preserves $Z$, so a wall crosses the relative interior of a face $G$
of $Z$ only if the reflection maps $G$ to itself; the foot of the
perpendicular from the origin onto the affine hull of $G$ is then
fixed by the reflection, so it lies on the wall and is also the foot
onto the part of the wall inside that affine hull.  Hence every
nonzero radius excluded for $P_\sigma$ is the distance from the
origin to a vertex of $Z$ or to the foot of the perpendicular onto an
edge or facet of $Z$, and these are the four radii of
Section~\ref{sec:geometry}: $1/\sqrt2$, $\sqrt{2/3}$, $\sqrt3/2$ and
$1$.  At those radii $F$ is continuous, being the integral of a
bounded weight over a continuously varying domain, so each regime's
expression extends to its endpoints.  The construction is finite:
finitely many chambers, homogeneous degrees, recursion steps, arcs
and corners.  In the low regime $R \le 1/\sqrt2$ the ball lies
inside $Z$, each $\Omega_R$ is a chamber triangle, the planar terms
vanish, and \eqref{eq:ball-polytope} reduces to
\eqref{eq:poly-structure}.
\end{proof}

\subsection{Worked example: Beta\texorpdfstring{$(2,1)$}{(2,1)}}

For a Beta$(a,b)$ parent, with density $f(x) = x^{a-1}(1-x)^{b-1}/B(a,b)$ on $[0,1]$, we
denote by
\begin{equation*}
  F^{(a,b)}(Y) \;=\; \Pr\bigl(S^2 \le Y\bigr), \qquad X_1,\dots,X_4 \ \text{i.i.d.\ Beta}(a,b),
\end{equation*}
the distribution function of the sample variance of four observations; the uniform case of
Theorem~\ref{thm:main} is $F^{(1,1)}$.  For $f(x) = 2x$, the Beta$(2,1)$ density, the weight $\rho$ has homogeneous degrees
$\{0,1,2,3,5\}$, and the recursion gives
\begin{equation}
  \mathcal{A}_0 = \tfrac{128\pi}{5},\quad
  \mathcal{A}_1 = -192\sqrt2\arctan\sqrt2,\quad
  \mathcal{A}_2 = \tfrac{128\pi + 256\sqrt3}{3},\quad
  \mathcal{A}_3 = -72\sqrt2\arctan\sqrt2 - 16,
\end{equation}
with $\mathcal{A}_4 = 0$ and
$\mathcal{A}_5 = \tfrac{32}{15} + \tfrac{48}{5}\sqrt2\arctan\sqrt2$.  Hence,
for $0 \le Y \le \tfrac16$,
\begin{align}
  F^{(2,1)}(Y) \;=\;\;&
    \frac{128\sqrt3\,\pi}{5}\,Y^{3/2}
    \;-\; 432\sqrt2\arctan\!\sqrt2\;Y^{2}
    \;+\; \frac{384\sqrt3\,\pi + 2304}{5}\,Y^{5/2} \notag\\
  &-\; \bigl(72 + 324\sqrt2\arctan\!\sqrt2\bigr)Y^{3}
    \;+\; \frac{108 + 486\sqrt2\arctan\!\sqrt2}{5}\,Y^{4}.
  \label{eq:FB21}
\end{align}
The $Y^2$ coefficient is exactly $-36K_g$, matching the edge
formula $-\tfrac94\bigl(f(0)^4+f(1)^4\bigr)K_g$ that the first-order
slab expansion gives at $n = 4$ for parents with mass at a support
endpoint.

\subsection{Worked example: Beta\texorpdfstring{$(2,2)$}{(2,2)}}

For $f(x) = 6x(1-x)$ the degrees are $\{0,2,4,5,6,7,9\}$, all even
moments are pure $\mathbb{Q}\pi + \mathbb{Q}\sqrt3$,
\begin{equation}
  \mathcal{A}_0 = \tfrac{576\pi}{35},\quad
  \mathcal{A}_2 = -\tfrac{3456\pi}{35},\quad
  \mathcal{A}_4 = \tfrac{3456\pi}{25},\quad
  \mathcal{A}_6 = -\tfrac{16128\pi + 27648\sqrt3}{35},
\end{equation}
and the odd moments are
$\mathcal{A}_5 = \tfrac{864}{5} + \tfrac{3888}{5}\sqrt2\arctan\sqrt2$,
$\mathcal{A}_7 = \tfrac{3942}{35} + \tfrac{37179}{35}\sqrt2\arctan\sqrt2$,
$\mathcal{A}_9 = -\tfrac{1752}{175} - \tfrac{16524}{175}\sqrt2\arctan\sqrt2$.
Assembling \eqref{eq:poly-structure}, for $0 \le Y \le \tfrac16$,
\begin{align}
  F^{(2,2)}(Y) \;=\;\;&
   \frac{576\sqrt3\,\pi}{35}Y^{3/2}
   - \frac{31104\sqrt3\,\pi}{175}Y^{5/2}
   + \frac{93312\sqrt3\,\pi}{175}Y^{7/2}
   + \frac{8748 + 39366\sqrt2\arctan\sqrt2}{5}Y^{4} \notag\\
  &- \Bigl(\frac{20736\sqrt3\,\pi}{5} + \frac{746496}{35}\Bigr)Y^{9/2}
   + \frac{957906 + 9034497\sqrt2\arctan\sqrt2}{350}Y^{5} \notag\\
  &- \frac{106434 + 1003833\sqrt2\arctan\sqrt2}{175}Y^{6}.
  \label{eq:FB22}
\end{align}
There is no $Y^2$ term: an interior parent has $f(0) = f(1) = 0$ and
the edge coefficient vanishes.  The three half-integer head
coefficients agree with the general small-$Y$ expansion
$c_1 = 8\sqrt3\pi\int f^4$ and its successors.

\subsection{The upper regimes: full-support closure}
\label{sec:poly-upper}

The remaining probability mass---$1.13\%$ for Beta$(2,1)$, $0.28\%$
for Beta$(2,2)$---lives beyond $Y = 1/6$, where the ball crosses the
support zonotope and the cap and wall-cut geometry of
Sections~\ref{sec:derive-mid} and~\ref{sec:derive-high} switches on
with polynomial weights.  The radial split of \eqref{eq:master} at
$r = \min(R, 1/g)$ gives the CDF everywhere in one stroke:
\begin{equation}
  F(Y) \;=\; \sum_d \bigl[\mathcal{A}_d - \widehat C_d(s)\bigr]
     \frac{R^{\,d+3}}{d+3}
  \;+\; \sum_d \frac{\widehat D_d(s)}{d+3},
  \qquad s = \arccos\frac{1}{\sqrt2 R},
  \label{eq:poly-upper}
\end{equation}
where $\widehat C_d(s) = 24\int_0^s G_d(\psi)\sin\psi\dd\psi$ and
$\widehat D_d(s) = 24\int_0^s G_d(\psi)\,(\sqrt2\cos\psi)^{-(d+3)}
\sin\psi\dd\psi$ are moments of the azimuthal integrals
$G_d(\psi) = \int_{X(\psi)} \rho_d\dd\varphi$ of the weight over
the chamber's azimuth set $X(\psi)$---the same three-piece set as in
Appendix~\ref{app:trig}, with wall arcs of half-width $\beta(\psi)$
centered at azimuths $\pm\arctan\sqrt2$ from the chamber's central
azimuth, i.e.\ $\operatorname{arccot}\sqrt2$ from each corner.
Every $G_d$ is obtained by evaluating trigonometric polynomial
primitives at the arc endpoints, together with a possible term
proportional to the angular length of the arc; the endpoint angles
themselves are inverse trigonometric functions of algebraic
quantities.  For the two parents below both $\psi$-integrals were
carried out in closed form, so \eqref{eq:poly-upper} is a finite
closed form in every regime; for a general polynomial parent the
elementarity of the assembled $F$ is Theorem~\ref{thm:poly}, whose
proof gives a finite construction that does not pass through these
$\psi$-integrals.  No
matching constants are fitted anywhere: \eqref{eq:poly-upper} is a
single expression of the moving limit $s$, continuous across the
thresholds by construction.

For the middle regime $\tfrac16 \le Y \le \tfrac29$ the results are
compact.  For Beta$(2,1)$,
\begin{align}
  F^{(2,1)}(Y) \;=\;\;& \pi\Bigl(-\tfrac{19}{20\sqrt2} + 33\sqrt2\,Y
   - 128\sqrt3\,Y^{3/2} + 459\sqrt2\,Y^{2}
   - 384\sqrt3\,Y^{5/2} \notag\\
  &\qquad + 324\sqrt2\,Y^{3}
   - \tfrac{486\sqrt2}{5}\,Y^{4}\Bigr) \notag\\
  &+ \tfrac{18}{5}\,Y^{2}\Bigl(128\sqrt{Y}
   - 120\sqrt2\arctan\sqrt2 \notag\\
  &\qquad - 10\,\bigl(2 + 9\sqrt2\arctan\sqrt2\bigr)\,Y
   + 3\,\bigl(2 + 9\sqrt2\arctan\sqrt2\bigr)\,Y^{2}\Bigr),
  \label{eq:FB21mid}
\end{align}
and for Beta$(2,2)$,
\begin{align}
  F^{(2,2)}(Y) \;=\; \frac{3}{5600}\biggl[\,&\pi\Bigl(-1639\sqrt2
    + 85617\sqrt2\,Y - 153600\sqrt3\,Y^{3/2}
    - 629370\sqrt2\,Y^{2} \notag\\
  &\quad + 1658880\sqrt3\,Y^{5/2} \notag\\
  &\quad + 1247400\sqrt2\,Y^{3} - 4976640\sqrt3\,Y^{7/2}
    - 14696640\sqrt2\,Y^{4} + 38707200\sqrt3\,Y^{9/2} \notag\\
  &\quad - 48183984\sqrt2\,Y^{5} + 10707552\sqrt2\,Y^{6}\Bigr) \notag\\
  &- 3888\,Y^{4}\Bigl(10240\sqrt{Y}
    - 420\,\bigl(2 + 9\sqrt2\arctan\sqrt2\bigr) \notag\\
  &\qquad - 9\,\bigl(146 + 1377\sqrt2\arctan\sqrt2\bigr)\,Y
    + \bigl(292 + 2754\sqrt2\arctan\sqrt2\bigr)\,Y^{2}\Bigr)\biggr].
  \label{eq:FB22mid}
\end{align}
The upper-middle and high regimes are longer; the four expressions
are recorded in Appendix~\ref{app:betafull}.  Four certificates
anchor the printed forms.  The middle-regime forms above meet the
low-regime forms \eqref{eq:FB21}--\eqref{eq:FB22} at $Y = 1/6$ with
difference exactly zero in exact arithmetic.  The regime joints at
$2/9$ and $1/4$ are continuous by construction of
\eqref{eq:poly-upper}, and the printed tables of
Appendix~\ref{app:betafull} meet there to $10^{-50}$ in $60$-digit
evaluation.  The ceiling identity $F(1/3) = 1$ reduces to exactly
zero for both parents after the identities
$\arctan(2\sqrt2) = \pi - 2\arctan\sqrt2$,
$\arctan(4\sqrt2/7) = 4\arctan\sqrt2 - \pi$,
$\arccos\sqrt{2/3} = \pi/2 - \arctan\sqrt2$ and
$\operatorname{arcsec}\sqrt3 = \arctan\sqrt2$.  And the first two
moments of $S^2$ computed from the printed CDFs reproduce the exact
values $\sigma^2$ and $\sigma^4\bigl(1 + 2/3 + \kappa/4\bigr)$, with
$\kappa$ the excess kurtosis of the parent, to $30$ digits, a check
that involves every regime at once.  In particular every constant in the full-support
CDFs stays inside the uniform-case field
$\mathbb{Q}\bigl(\pi, \sqrt2, \sqrt3, \arctan\sqrt2\bigr)$.

\subsection{Validation}

Each $\mathcal{A}_d$ was verified by a hybrid of exact and numerical
steps: the symmetric part of the
integrand is evaluated in closed form as a Gaussian moment, and the
non-symmetric remainder, where it does not vanish identically, against
an independent $60$-digit Gauss--Legendre quadrature over the chamber
triangle, agreeing to all computed digits; four of the twelve moments
have zero remainder and are therefore exact outright.  The azimuthal
integrals $G_d$ were checked against adaptive quadrature over the exact
arc intervals to $10^{-13}$.  End to end, the assembled CDFs were
checked independently in the original $X$-coordinates by a
deterministic tensor-Gauss quadrature of
$\int f(x_1)f(x_2)f(x_3)\,[F_f(b)-F_f(a)]\dd x_{1:3}$, with $[a,b]$
the exact $x_4$-interval of the variance constraint.  At $Y = 0.05$ and
$0.13$ the largest observed absolute discrepancy is $5.83\eee{7}$, and
at $Y = 0.18$, $0.20$, $0.235$, $0.26$ and $0.30$, in the three upper
regimes, it is $\le 10^{-9}$, for both parents.  The $320$-to-$480$-node
refinement changes at $Y = 0.05$ are reported in Table~\ref{tab:betaquad}; they are convergence
diagnostics, not certified error bounds, and there the observed discrepancy exceeds the
refinement change for both Beta parents.  Monte Carlo at
$N = 4\times10^7$ draws agrees within
binomial noise at every tested $Y$ in every regime.  Figure~\ref{fig:poly-cdf} plots the three
exact laws on the full support.

\begin{table}[htbp]
\centering
\caption{Low-regime quadrature check at $Y=0.05$: absolute formula--quadrature discrepancy at
$480$ nodes per coordinate, against the $320\to480$-node refinement change. The refinement change
is a convergence diagnostic, not a certified error bound, and does not bound the discrepancy for
either parent.}
\label{tab:betaquad}
\begin{tabular}{lrrr}
\toprule
Parent & Formula--480-node discrepancy & $320\to480$ change & Ratio \\
\midrule
Beta$(2,1)$ & $2.58889\times10^{-7}$ & $9.56508\times10^{-8}$ & $2.7066$ \\
Beta$(2,2)$ & $5.82856\times10^{-7}$ & $4.90441\times10^{-7}$ & $1.1884$ \\
\bottomrule
\end{tabular}
\end{table}

\subsection{Provenance of the results}\label{sec:provenance}

Computer algebra participates in several derivations, so we record
what is established how.

\begin{center}
\renewcommand{\arraystretch}{1.2}
\small
\begin{tabular}{@{}ll>{\raggedright\arraybackslash}p{5.4cm}@{}}
\toprule
result & derivation & certificate \\
\midrule
critical radii (\S\ref{sec:geometry}) & projection geometry &
  hull distances, $10$ digits \\
low regime \eqref{eq:Flow}; $K_g$ \eqref{eq:Kg} & by hand & exact \\
middle regime \eqref{eq:Fmid} & by hand, cap integral &
  identity check $2\eee{14}$ \\
upper densities \eqref{eq:fR4a}--\eqref{eq:fR4b} &
  App.~\ref{app:trig} $+$ CAS &
  assembly $\equiv$ closed form, $8\eee{14}$ \\
$\Phi_a, \Phi_b$ & CAS &
  $60$-digit derivative residual $10^{-58}$ \\
joint constants at $2/9$, $1/4$ & CAS & symbolic \\
$m(\psi)$, wall structure & by hand &
  spherical sampling; tiling identity exact \\
moments $\mathcal{A}_d$ & divergence recursion, exact &
  exact where the remainder vanishes, else $60$-digit quadrature of the remainder \\
full-support Beta forms & \eqref{eq:poly-upper} $+$ CAS &
  exact joint at $1/6$ and ceiling \\
Beta joints at $2/9$, $1/4$ & by construction &
  printed tables agree to $10^{-50}$ \\
all printed CDFs & --- & moments of $S^2$ exact to $30$ digits \\
independent quadrature & original coordinates &
  $\le 5.8\eee{7}$; $\le 10^{-9}$ upper regimes \\
Monte Carlo & --- & binomial noise, $N = 2$--$4\times10^7$ \\
\bottomrule
\end{tabular}
\end{center}

\begin{figure}[t]
\centering
\includegraphics[width=\linewidth]{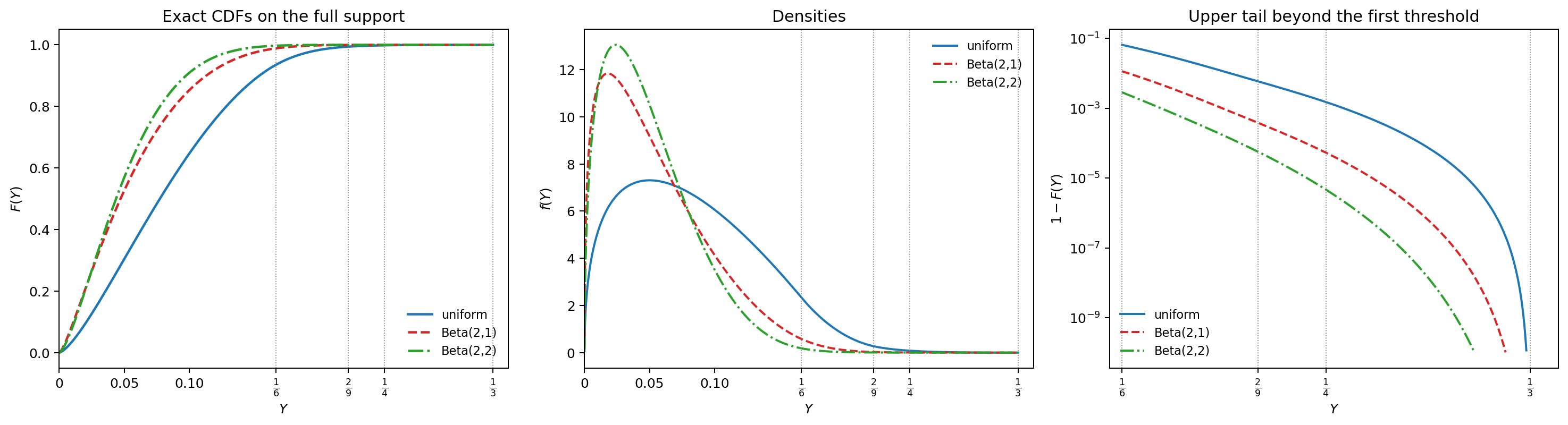}
\caption{Exact CDFs, densities and upper tails at $n = 4$ on the
full support for the uniform, Beta$(2,1)$, and Beta$(2,2)$ parents;
dotted verticals mark the parent-independent thresholds. The middle
panel's densities are central finite-difference approximations to
the exact closed-form CDFs ($h=10^{-6}$), not the analytic
derivatives of Corollary~\ref{cor:density}. The right
panel shows $1 - F$ on a logarithmic scale beyond $Y = 1/6$, down to
$10^{-10}$, below which evaluating $1 - F$ from the closed forms in
double precision is limited by cancellation.  Both Beta
parents accumulate variance mass faster than the uniform parent; as
$Y \to 0$ their leading coefficients stand to the uniform one in the
ratios $\int f^4 = 16/5$ and $72/35$, and the more concentrated
Beta$(2,2)$ law overtakes Beta$(2,1)$ early in the regime.  All
three curves are finite elementary expressions with constants in
$\mathbb{Q}\bigl(\pi,\sqrt2,\sqrt3,\arctan\sqrt2\bigr)$.}
\label{fig:poly-cdf}
\end{figure}

\section{Comparison with Royen's series}\label{sec:royen}

\citet{royen2007b,royen2008} represents the variance CDF for
any bounded continuous parent as a Fourier cosine series in
$x = (n-1)Y$,
\[
  F(Y) \;=\; 1 - \frac{\operatorname{E} Q}{q} \;-\; \frac{2}{\pi}
  \sum_{k \ge 1} \frac{\operatorname{Im}\hat f_Q(t_k)}{k}\,
  \cos(t_k x),
  \qquad t_k = \frac{k\pi}{q},
\]
where $Q = (n-1)S^2$, $q = \sup Q$, and each coefficient
$\hat f_Q(t_k)$ requires an oscillatory integral of an $n$-th power
of a characteristic-function factor.  The constant is Royen's
closed-form value of the $x$-free part of his series, $3/4$ for the
uniform parent at $n = 4$; summing that part term by term instead
leaves a truncation error that is the same at every $Y$ and decays
slowly in the number of terms $K$.  Instantiated at $n = 4$ for the
uniform parent and measured against Theorem~\ref{thm:main} at seven
points spread over the support, the series needs a median of $3$,
$9$ and $18$ terms to first reach $10^{-2}$, $10^{-3}$ and
$10^{-4}$, and $4$, $11$ and $29$ terms to stay within them for all
computed truncations $K \le 220$; at $K = 220$ its error is at most
$6.4\times10^{-7}$ at those points and $1.3\times10^{-5}$ on a grid
over $[0.01, 0.32]$.  These are observed maxima over the computed
truncations and the stated grid, not remainder bounds.  It is slower near the origin:
at $K = 220$ the partial sum still misses $F(0) = 0$ by
$1.8\times10^{-4}$, so accuracy that is uniform over the whole support
is limited by the endpoint.  Building the $220$ coefficients, each an
oscillatory quadrature, took about half a minute; evaluating the
cached series then costs microseconds per point.  The closed form is
exact everywhere, endpoints included, at the fixed cost of a handful
of inverse-trigonometric evaluations and with no construction step.
The series retains its structural advantages---uniformity in the
parent and in $n$---and remains the natural method where no closed
form is known.  It also improves with $n$.  With one implementation
across sample sizes and the same seven relative positions, measured
against the closed forms at $n = 3, 4$ and a numerical reference
evaluator at $n = 5$, the median number of terms to first reach $10^{-3}$ on the
uniform parent falls from $13$ at $n = 3$ to $9$ at $n = 4$ and $4$
at $n = 5$, and the endpoint error at $K = 220$ from
$3.3\times10^{-3}$ to $1.8\times10^{-4}$ to $8\times10^{-7}$: each
Fourier coefficient carries the $n$-th power of a characteristic
function, and higher powers decay faster.  The closed form's advantage
is therefore largest at the small $n$ where it exists, and the ledger
of Section~\ref{sec:ledger} records the comparison.

\section{Discussion}\label{sec:discussion}

\paragraph{The pattern across $n$.}
At $n = 3$ the contrast space is a plane: the cube projects to a
hexagon, the variance set is a disk, and the CDF closes elementarily
with one interior threshold \citep{n3paper}.  At $n = 4$ the contrast
space is $\RR^3$: the cube projects to the rhombic dodecahedron, the
variance set is a ball, and the CDF again closes elementarily, now
with three interior thresholds---one per boundary orbit of the richer
polytope.  The same reduction at general $n$ expresses the CDF
through a solid-angle integral over $S^{n-2}$ cut by the braid
arrangement $A_{n-1}$, with $n(n-1)$ congruent cut-caps.  At $n = 5$
the low regime again recovers \citet{aliRichards1975}
with $b(5) = K_5/\sqrt5$.  Whether the closure persists beyond the low
regime for $n \ge 5$ is a separate question: the reduction extends,
but our proofs of elementary closure at $n = 3$ and $n = 4$ use that
the cut objects are points and arcs, and one dimension up they are
not.  We
make no claim here about larger $n$.  The present paper is the most
complete concrete instance of the general reduction.

\subsection{From three observations to four: a ledger}\label{sec:ledger}

It is worth recording, side by side, what the passage from $n = 3$ to
$n = 4$ preserved, what it cost, what was not carried over, and what
it says about larger $n$.  Table~\ref{tab:ledger} is the ledger.

\begin{table}[t]
\caption{Three observations against four, uniform parent.  The last
row is the median number of terms Royen's universal series,
assembled with its exact constant, needs to first reach absolute
accuracy $10^{-3}$ at seven points spread over the support, with one
implementation for both sample sizes.}
\label{tab:ledger}
\begin{center}
\begin{tabular}{@{}lll@{}}
\toprule
 & $n = 3$ & $n = 4$ \\
\midrule
contrast space & disk in a hexagon & ball in a rhombic dodecahedron \\
interior thresholds & $1/4$ (edges) & $1/6$, $2/9$, $1/4$ (faces, edges, corner vertices) \\
ceiling & $1/3$ (vertices) & $1/3$ (axial vertices) \\
regimes & $2$ & $4$ \\
constants & $\mathbb{Q}(\pi, \sqrt3)$ & $\mathbb{Q}(\pi, \sqrt2, \sqrt3, \arctan\sqrt2)$ \\
ceiling law, $1 - F(\tfrac13 - \delta)$ & $\tfrac{27}{2}\,\delta^3$ & $\tfrac{81}{4}\,\delta^4$ \\
Royen terms for $10^{-3}$ & $13$ & $9$ \\
\bottomrule
\end{tabular}
\end{center}
\end{table}

\paragraph{What extended, and at what cost.}
Everything structural extended.  The reduction to a ball cut by the
projected cube, the reading of thresholds as boundary-orbit distances,
the polynomial closure theorem for the whole class, and the exactness
of every constant all carry over, and at $n = 4$ the closure proof
names its mechanism: by the divergence theorem every moment reduces
to trigonometric integrals along circular arcs, spherical areas given
by the Gauss--Bonnet formula, and planar integrals over regions
bounded by segments and arcs.  The price is size.  Two regimes became
four, one variable angle became three, since the angle functions of
Theorem~\ref{thm:main} reduce to $A_1$, $A_2$ and $A_4$ through
$W = A_1$, $A_3 = \pi/2 - A_2$ and $B_\pm = A_4 \pm A_2$, the
constant field acquired $\sqrt2$ and $\arctan\sqrt2$, and the upper
regimes required the machine-generated Appendix~\ref{app:beta-upper}
rather than a display.  Regularity does not deteriorate: the density is $C^1$ in both
cases, and the only half-power singularity at $n = 4$, at $2/9$, has
the same origin as the single $n = 3$ bifurcation at $1/4$.  Both are
square-root onsets of an arc measure---at $n = 3$ the arcs of the
variance circle beyond the hexagon's edges, at $n = 4$ the wall cuts
of the caps---and produce a half-power divergence $(Y - Y_*)^{-1/2}$,
in the third derivative of $F$ at $n = 3$ and in the fourth at
$n = 4$, one derivative later because the arc measure is integrated
once more.  The other two $n = 4$ thresholds, the appearance of a cap
at $1/6$ and the merger of wall cuts at $1/4$, produce jumps in a
derivative rather than half powers.

\paragraph{What was not carried over.}
The $n = 3$ paper's closure hierarchy below the polynomial row---the
proof that a rational parent forces a dilogarithm, the genus-two
radial obstruction of the arcsine parent, the identification of the
Bessel and error-function rows---has no $n = 4$ counterpart in this
paper.  We expect each to persist, since the fiber integral that
creates each obstruction is unchanged, but we have not proved it, and
an extra integration can cancel transcendental functions as well as
create them; these rows remain conjectures at $n = 4$.  The singular-parent program transfers in its analytic
part and not yet in its numerical part: the endpoint laws for the
arcsine parent at $n = 4$ follow from the same corner and tube
arguments, with the exponents $n/4 = 1$ at the origin and
$2a + 2b = 2$ at the ceiling and the exact ceiling coefficient
$27/\pi^2 = 2.7357$ from the Dirichlet integral over the simplex at
an axial vertex, both analytic consequences of those arguments rather
than numerically checked laws at $n = 4$, but the
six-term approximation that made the arcsine parent computable at
$n = 3$ has not been built at $n = 4$.  It could be, by the same
subtraction of the two endpoint laws, and we regard it as the natural
next step for singular parents.

\paragraph{What the closure depends on.}
Two elementary cases in a row invite the guess that the pattern
continues.  The proof of Theorem~\ref{thm:poly} says what the guess
would have to rest on.  At $n = 3$ the cut objects on the circle are
points; at $n = 4$ they are arcs of circles on $S^2$, and the proof
uses exactly that: along a circular arc every boundary term is a
trigonometric polynomial, and the area of a region bounded by such
arcs is elementary by the Gauss--Bonnet formula.  One dimension up
the cut objects are spherical polygons, and neither step transfers as
stated.  The closure at $n = 3$ and $n = 4$ is therefore a statement
about those two geometries, and we make no claim for larger $n$; the
ledger shows the cost rising steeply even where it holds.

\paragraph{Royen's series and the normal limit.}
Two things gain ground on the closed form as $n$ grows, and they
should be separated.  The first is Royen's universal series, whose
term counts fall with $n$ at every accuracy from $10^{-3}$ down, from
$13$ terms for $10^{-3}$ at $n = 3$ to $9$ at $n = 4$ and $4$ at
$n = 5$ (Section~\ref{sec:royen}).
The coefficients of that series decay like a power of the
characteristic function, and higher powers decay faster, so the series
improves with $n$ at the same time as the closed form becomes harder
to obtain.  The second is the central limit theorem: $S^2$ is
asymptotically normal, and one might expect the exact law to matter
less as $n$ grows.  Table~\ref{tab:approx} tests that expectation
against the exact CDFs at $n = 3$ through $6$.  The expectation is
correct for a normal approximation that carries the right variance,
$\operatorname{Var} S^2 = \sigma^4\bigl[2/(n-1) + \kappa/n\bigr]$ with
$\kappa = -6/5$ the excess kurtosis of the uniform parent: its
grid-maximum error (Table~\ref{tab:approx}) falls from $0.098$ to $0.026$ and its error in the
$99$th percentile from $8\%$ to $4\%$.  It fails for the
normal-theory recipe $(n-1)S^2/\sigma^2 \sim \chi^2_{n-1}$, which is
what acceptance-testing practice uses.  That recipe assumes
$\kappa = 0$, so the actual variance of $S^2$ divided by the recipe's
is $1 + \kappa(n-1)/(2n)$, which tends to $2/5$, and its grid-maximum
error \emph{grows}, from $0.064$ to $0.102$, toward a nonzero limit
of $0.109$.  Its $99$th-percentile error does shrink, but slowly: from
$52\%$ to $42\%$ over this range, and like $1.21\,n^{-1/2}$ in the
limit.  The exact closed forms therefore matter most at the small $n$
where the corrected approximation has not yet arrived and the recipe
in use is far off in both senses; what they buy is diminished not by
the normal limit but by the availability of the kurtosis correction,
which is a different question and has a different answer for every
parent.

\begin{table}[t]
\caption{Exact CDF of $S^2$ for the uniform parent against three
approximations, $n = 3$ to $6$: the normal-theory recipe
$\chi^2_{n-1}$, a scaled $\chi^2$ with mean and variance matched to
the exact $S^2$, and the kurtosis-corrected normal.  Maximum absolute
error over an endpoint-inclusive grid of $4002$ points spanning
$[0, Y_{\max}]$, and relative error in the $99$th percentile.  These
are maxima over that grid rather than proven suprema; the largest
entry, $0.098$ for the normal recipe at $n = 3$, is attained at the
endpoint $Y = 0$, where the error is $\Phi(-\sqrt{5/3}) = 0.09835$.
Exact CDFs: closed forms at $n = 3, 4$; at $n = 5, 6$, piecewise
Chebyshev numerical reference evaluators on the exact regime
partition, whose observed maximum error against an independent
deterministic reference was below $10^{-8}$.  The $n = 5, 6$ rows are
therefore conditional on those reference evaluators, whose
construction is not reproduced here.}
\label{tab:approx}
\begin{center}
\small
\begin{tabular}{@{}lcccccc@{}}
\toprule
 & \multicolumn{3}{c}{$\max_Y |F - \widetilde F|$} & \multicolumn{3}{c}{$99$th percentile, relative error} \\
\cmidrule(lr){2-4}\cmidrule(lr){5-7}
$n$ & $\chi^2_{n-1}$ & matched $\chi^2$ & normal & $\chi^2_{n-1}$ & matched $\chi^2$ & normal \\
\midrule
$3$ & $0.064$ & $0.066$ & $0.098$ & $+52\%$ & $+19\%$ & $-8\%$ \\
$4$ & $0.083$ & $0.052$ & $0.055$ & $+50\%$ & $+15\%$ & $-5\%$ \\
$5$ & $0.095$ & $0.046$ & $0.036$ & $+45\%$ & $+11\%$ & $-5\%$ \\
$6$ & $0.102$ & $0.042$ & $0.026$ & $+42\%$ & $+9\%$ & $-4\%$ \\
\bottomrule
\end{tabular}
\end{center}
\end{table}

\paragraph{What was learned.}
Three things, each visible only from the pair of cases.  First, the
thresholds are geometric objects: they are the critical values of the
distance from the diagonal on the boundary strata of the projected
cube, one per boundary orbit.  Second, in both cases the functions
that appear are dictated by the cut objects rather than by the
dimension of the problem: at $n = 3$ the limits of integration are
angles, at $n = 4$ the boundary terms are trigonometric integrals
along arcs, and both give algebraic and inverse trigonometric
functions.  We offer this as a description of the two cases, not as a
rule for larger $n$.  Third, polynomial closure at $n = 4$ is a
statement about polynomial moments of a ball clipped by a polytope,
proved by reducing each moment to its boundary; it holds for the
whole polynomial class at $n = 3$ and $n = 4$ and should not be
assumed beyond.

\paragraph{From variance to standard deviation.}
$\Pr(S \le s) = F(s^2)$, so the standard-deviation CDF inherits the
closed form with thresholds at the square roots of the four $Y$
values; nothing new is needed.

\paragraph{Application to acceptance testing.}
Content-uniformity criteria evaluate events built on small-sample
standard deviations of bounded measurements \citep{usp905,williamsHauck2002}.  For four-unit designs, Theorem~\ref{thm:main}
turns the variance-channel probability into an exact elementary
evaluation with no Monte Carlo and no series truncation, and the
$Y^{3/2}$ small-$Y$ law with its exact second-order coefficient
$-\tfrac92 K_g$ quantifies how sharply probability accumulates near
zero variance for a flat parent.  What the exact laws supply is a controlled small-sample
benchmark, against which variance calculations and general
acceptance-probability algorithms can be checked where their own error
is otherwise unmeasurable. Table~\ref{tab:approx} sets the scale: applying the
normal-model variance law to the stated uniform parent misplaces the
$99$th percentile of $S^2$ by $50\%$ at $n = 4$, because that law's
variance omits the parent's kurtosis. The figure is a statement about
the two models compared, not a measured error in any fielded
procedure; the harmonized $\langle905\rangle$ text prescribes ten- and
thirty-unit stages, and we make no claim about a four-unit one.  That quantile error decays only like $n^{-1/2}$,
and the recipe's CDF error does not decay at all.  A closed form
removes the error entirely at the sample sizes it covers.  The companion paper \citep{n3paper}
develops the statistical applications systematically; we record here
only that the $n = 4$ uniform case, to our knowledge previously
beyond closed-form reach, is now within it.

\section*{Acknowledgments}

The author thanks Kevin T.~Chu and Ron Yu, co-authors of the companion paper on three observations
\citep{n3paper}, whose geometric formulation and exact uniform-parent derivations this work
builds on, and Nicole Tin, Somdatta Bhattacharya and Chiajen Lai for extensive
conversation about the pharmaceutical applications and in-silico tools that motivated this
work.  The author also thanks Mandeep Chauhan, Mike Salem, Ash Baghai, Wei-Ting Liu, Shikhar Mohan, Adedayo
Sanni, Chris Remmers, Caroline Branch, Kiara Cui, Dana Caulder, Joanna Koziara, Victor Rucker,
Jennifer Hedborn and Henry Morrison for motivating discussions at Gilead Sciences.

\section*{Statements and Declarations}

\paragraph{Funding.} No external funding was received for this work.

\paragraph{Competing interests.} The author is an employee of Gilead Sciences and has no other
competing interests to declare.

\paragraph{Author contributions.} R.~Osan is the sole author and carried out all of the work.

\paragraph{Data availability.} No data were generated or analyzed. Every parent distribution in
this paper is an analytic law stated in the text.

\paragraph{Code availability.} Every formula is stated in full and can be implemented from the
text. The scripts behind the derivations and validations --- the geometric verification of the
critical radii and the four-regime assembly with its continuity, ceiling and Monte Carlo checks;
the computer-algebra derivation and certification of $\Phi_a,\Phi_b$ and the matching constants;
the verification of the wall-arc geometry, cell measure and moment antiderivatives; the exact
spherical-triangle moment engine of \eqref{eq:div-recursion} with its $60$-digit quadrature
cross-checks; the full-support Beta pipeline of Section~\ref{sec:poly-upper}; exact and
$60$-digit certificates of every printed formula, which fail when any printed coefficient is
altered; an independent quadrature in the original coordinates for all three parents in every
regime; Royen's series with its exact constant; the comparison of Table~\ref{tab:approx}; and the
figure generation --- are available from the author on request and will be released with the
published article.

\paragraph{Use of generative AI.} Claude models (Anthropic) assisted with derivations, code
implementation and manuscript drafting. All outputs were verified by the author against
independent symbolic and numerical computations (Python, MATLAB).

\appendix

\section{The upper-regime densities, spelled out}\label{app:trig}

This appendix derives \eqref{eq:fR4a} and \eqref{eq:fR4b} from the
spherical geometry, with every intermediate object stated explicitly.
Each claim below was additionally machine-verified; the assembled
one-dimensional integral of \eqref{eq:app-assembly} reproduces the
closed densities of Corollary~\ref{cor:density} to $10^{-14}$ at spot
points in every regime.

\subsection{From the master integral to one cap}

Start from \eqref{eq:density} and split off the full-sphere part:
\begin{equation}
  f(Y) \;=\; 3R\bigl(4\pi - R\,K_g\bigr) \;+\; 3R\,\mathcal E(R),
  \qquad
  \mathcal E(R) \;=\; \int_{\{g \ge 1/R\}}
      \bigl(R\,g(\omega) - 1\bigr)\dd\Omega ,
  \label{eq:app-excess}
\end{equation}
so all regime structure lives in the excess integral $\mathcal E$
over the excision set.  The six great circles $\{c_i = c_k\}$, the
$A_3$ braid arrangement on $S^2$ \citep{humphreys1990}, cut the
sphere into $24$ chambers
on which the identities of $\arg\max_i c_i$ and $\arg\min_i c_i$ are
constant; call the union of chambers with $\arg\max = i$,
$\arg\min = j$ the \emph{cell} of the ordered pair $(i,j)$.  On cell
$(i,j)$ the range gauge is the single linear functional
$g = b_{ij}\cdot\omega = \sqrt2\cos\psi$, with $\psi$ the colatitude
from the unit cap axis $u_{ij} = b_{ij}/\sqrt2$.  The coordinate
permutations act on contrast space as the reflection group
$T_d \cong S_4$, carrying cells onto cells isometrically, so the twelve cell contributions to
$\mathcal E$ are equal:
\begin{equation}
  \mathcal E(R)
   \;=\; 12 \int_0^{\psi_R}
     \bigl(\sqrt2\,R\cos\psi - 1\bigr)\, m(\psi)\,\sin\psi \dd\psi,
  \qquad
  \psi_R = \arccos\!\frac{1}{\sqrt2\,R},
  \label{eq:app-assembly}
\end{equation}
where $m(\psi)$ is the azimuthal measure of cell $(1,2)$ at
colatitude $\psi$ from its axis.  The excision is nonempty once
$\sqrt2 R > 1$, i.e.\ $Y > 1/6$: the cap onset of
Section~\ref{sec:derive-mid}.

\subsection{The four walls of a cap}

The roots adjacent to $b_{12}$ are the four sharing one index:
$b_{13}, b_{14}, b_{32}, b_{42}$, each at $60^\circ$ from $b_{12}$
since $b_{12}\cdot b_{1k} = 1 = \tfrac12\,|b|^2$.  Their azimuths
around the axis $u_{12}$ are not equally spaced: computing the
projections into the plane perpendicular to $u_{12}$ gives the
alternating gaps
\begin{equation}
  \arccos\tfrac13 \;\approx\; 70.5288^\circ
  \qquad\text{and}\qquad
  \pi - \arccos\tfrac13 \;\approx\; 109.4712^\circ,
  \label{eq:app-gaps}
\end{equation}
the vertex figure of the cuboctahedron being a rectangle, not a
square.  The wall between cells $(1,2)$ and $(1,3)$ is the great
circle $\{c_2 = c_3\}$; a point of the cap at colatitude $\psi$ and
azimuth $\phi$, measured from the neighbor's azimuth, lies beyond
the wall exactly when $\omega\cdot b_{12} < \omega\cdot b_{13}$.
Writing $\omega = \cos\psi\,u + \sin\psi\,(\cos\phi\,e_1 +
\sin\phi\,e_2)$ with $e_1$ toward the neighbor, and using
$u_{13} = \tfrac12 u_{12} + \tfrac{\sqrt3}{2}e_1$:
\begin{equation}
  \cos\psi \;<\; \tfrac12\cos\psi + \tfrac{\sqrt3}{2}\sin\psi\cos\phi
  \quad\Longleftrightarrow\quad
  \cos\phi \;>\; \frac{\cot\psi}{\sqrt3},
\end{equation}
an azimuthal arc of half-width
\begin{equation}
  \beta(\psi) \;=\; \arccos\!\frac{\cot\psi}{\sqrt3},
  \label{eq:app-beta}
\end{equation}
nonempty exactly for $\psi > \pi/6$.  Differentiating
\eqref{eq:app-beta} exposes the advertised radical:
\begin{equation}
  \beta'(\psi) \;=\; \frac{1}{\sin\psi\,\sqrt{3 - 4\cos^2\psi}} ,
  \qquad
  3 - 4\cos^2\psi \;=\; 4\sin^2\psi - 1 .
  \label{eq:app-betaprime}
\end{equation}

\subsection{The azimuthal cell measure}\label{app:azimuthal}

Subtracting the four wall arcs from the full circle, and accounting
for their merging, gives a three-piece elementary formula:
\begin{equation}
  m(\psi) \;=\;
  \begin{cases}
    2\pi, & 0 \le \psi \le \tfrac{\pi}{6},\\[3pt]
    2\pi - 8\,\beta(\psi), &
      \tfrac{\pi}{6} \le \psi \le \operatorname{arccot}\sqrt2,\\[3pt]
    2\pi - 4\,\beta(\psi) - 2\arccos\tfrac13, &
      \operatorname{arccot}\sqrt2 \le \psi \le \tfrac{\pi}{4}.
  \end{cases}
  \label{eq:app-m}
\end{equation}
The first case boundary is the onset of the wall arcs and the second
an adjacent-arc merger; a third merger closes the cell at the top of
the range:
\begin{itemize}[leftmargin=1.6em, itemsep=2pt]
\item \emph{Arc onset.}  By \eqref{eq:app-beta}, $\beta(\pi/6) =
  \arccos 1 = 0$: below $\psi = \pi/6$ the cap meets no wall and the
  full circle survives, and at $\psi = \pi/6$ the four wall arcs
  appear with zero width.  This is the first case boundary of
  \eqref{eq:app-m}; no arcs have merged there.
\item \emph{Adjacent merge.}  Arcs across a narrow gap
  \eqref{eq:app-gaps} touch when $2\beta(\psi) = \arccos\tfrac13$;
  by the half-angle identity $\arccos\tfrac13 =
  2\operatorname{arccot}\sqrt2$ this solves to $\psi =
  \operatorname{arccot}\sqrt2 \approx 35.2644^\circ$.  Through
  $\psi_R$ this is exactly $Y = 1/4$, the corner-vertex threshold:
  past it the two merged pairs contribute their fixed union
  $2\beta + \arccos\tfrac13$ each, which is the third line of
  \eqref{eq:app-m}.
\item \emph{Opposite merge and tiling.}  Arcs across a wide gap touch
  when $2\beta(\psi) = \pi - \arccos\tfrac13$, which solves to $\psi
  = \pi/4$---exactly the largest colatitude the cap ever attains,
  reached at $Y = 1/3$.  There the cell closes up:
  \begin{equation}
    m\bigl(\tfrac{\pi}{4}\bigr)
    \;=\; 2\pi - 4\arccos\tfrac{1}{\sqrt3} - 2\arccos\tfrac13
    \;=\; 0,
    \label{eq:app-tiling}
  \end{equation}
  an exact identity.  This is the mechanism forcing
  $f(1/3^-) = 0$: at the ceiling the twelve cell-restricted caps,
  each cap intersected with its own cell, tile the entire sphere.
\end{itemize}
Formula \eqref{eq:app-m} was verified against direct spherical
sampling of the cell condition at eleven colatitudes spanning all
three pieces, with agreement at the $10^{-5}$ sampling resolution;
Figure~\ref{fig:atlas-mpsi} overlays the two.

\subsection{Moment integrals and assembly}

Substituting \eqref{eq:app-m} into \eqref{eq:app-assembly} reduces
$\mathcal E$ to elementary cap moments plus the two $\beta$-moments
\begin{equation}
  M_0(\psi) = \int \beta(\psi)\sin\psi\dd\psi,
  \qquad
  M_1(\psi) = \int \beta(\psi)\sin\psi\cos\psi\dd\psi,
\end{equation}
whose antiderivatives are elementary, carrying only the radical of
\eqref{eq:app-betaprime}:
\begin{align}
  M_0(\psi) &= -\beta(\psi)\cos\psi
     + \arctan\sqrt{4\sin^2\psi - 1},
  \label{eq:app-M0}\\
  M_1(\psi) &= \tfrac14\Bigl[-\beta(\psi)\cos2\psi
     - \arctan\frac{\cos\psi}{\sqrt{4\sin^2\psi-1}}
     + \arctan\frac{2\cos\psi}{\sqrt{4\sin^2\psi-1}}\Bigr].
  \label{eq:app-M1}
\end{align}
Both were certified by symbolic differentiation, which returns
exactly $\beta\sin\psi$ and $\beta\sin\psi\cos\psi$, and by
quadrature to $10^{-12}$ and $10^{-16}$ respectively.  Evaluating at
the moving limit $\psi_R$ converts the trigonometry of $\psi$ into
the algebra of $Y$: from $\cos\psi_R = 1/(\sqrt2 R)$,
\begin{equation}
  4\sin^2\psi_R - 1 = 3 - \frac{2}{3Y},
  \qquad
  \beta(\psi_R) = \arccos\frac{1}{\sqrt{18Y-3}},
\end{equation}
so $\arctan\sqrt{4\sin^2\psi_R - 1} = A_4$ and $\beta(\psi_R) = A_3$
in the notation of \eqref{eq:angles}, while the two arctangents of
\eqref{eq:app-M1} evaluated at $\psi_R$ are exactly $A_2$ and
$A_1$, since the squared arguments reduce to $1/(18Y-4)$ and
\begin{samepage}
$2/(9Y-2)$.  The fixed limits $\pi/6$,
$\operatorname{arccot}\sqrt2$, and $\pi/4$ contribute the constants
$\arccos\tfrac13$, $\arctan\sqrt2$, and $\arctan(\sqrt2/5)$
appearing in $\kappa_0$ and $\kappa_1$.  Collecting the terms of
\eqref{eq:app-assembly} for $\psi_R \le \operatorname{arccot}\sqrt2$
gives \eqref{eq:fR4a}; for $\psi_R > \operatorname{arccot}\sqrt2$ it
gives \eqref{eq:fR4b}.  As an end-to-end certificate, the numerical
evaluation of \eqref{eq:app-excess}--\eqref{eq:app-assembly} with
\eqref{eq:app-m} matches the closed densities at seven spot points
across the middle and upper regimes with differences at most
$8\times10^{-14}$, and integrating the same representation in $Y$
recovers the antiderivatives $\Phi_a, \Phi_b$ of
Theorem~\ref{thm:main}.

\section{Full-support closed forms for the Beta parents}
\label{app:beta-upper}
\label{app:betafull}

This appendix records the upper-middle ($\tfrac29 \le Y \le
\tfrac14$) and high ($\tfrac14 \le Y \le \tfrac13$) regime closed
forms assembled by \eqref{eq:poly-upper}; the low and middle regimes
appear in Section~\ref{sec:poly}.  Each expression is presented as a table of coefficients: with the
basis functions $\varphi$ of the left column,
\begin{equation*}
  F(Y) \;=\; \sum_{\varphi} \varphi \cdot P_\varphi(Y),
\end{equation*}
where the row $\varphi = 1$ carries the purely algebraic part.  The
tables are machine-generated from the certified computer-algebra
assembly and re-validated numerically after conversion; the
machine-readable originals accompany the manuscript.  Notation
matches the uniform case: $A_1, \dots, A_4$ are the angles
\eqref{eq:angles}, $B_\pm = B_\pm(Y)$ the arctangents
\eqref{eq:Bpm}, and
$W = W(Y) = \arctan\bigl(\sqrt2/\sqrt{9Y-2}\bigr)$, the angle
appearing in the uniform antiderivative \eqref{eq:Phia}.
\needspace{6\baselineskip}
Beta$(2,1)$, upper-middle regime:
\nopagebreak
\begin{center}\small
\renewcommand{\arraystretch}{1.35}
\begin{tabular}{>{$}l<{$}>{$}l<{$}}
\toprule
\varphi & \text{coefficient } P_\varphi(Y) \text{ in } F^{(2,1)},\ \tfrac29 \le Y \le \tfrac14 \\
\midrule
1 & \frac{2304 Y^{\frac{5}{2}}}{5} - 72 Y^{3} +\frac{108 Y^{4}}{5} \\
\pi & \frac{19 \sqrt{2}}{40} +33 \sqrt{2} Y - 128 \sqrt{3} Y^{\frac{3}{2}} +27 \sqrt{2} Y^{2} \\
 & \quad - 384 \sqrt{3} Y^{\frac{5}{2}} \\
\sqrt{9Y-2} & \frac{13}{9} - \frac{34 Y}{15} - \frac{1524 Y^{2}}{5} \\
\arctan\sqrt2 & - 432 \sqrt{2} Y^{2} - 324 \sqrt{2} Y^{3} +\frac{486 \sqrt{2} Y^{4}}{5} \\
A_3 & - 132 \sqrt{2} Y - 108 \sqrt{2} Y^{2} \\
A_4 & \frac{3072 \sqrt{3} Y^{\frac{3}{2}}}{5} \\
W & - 864 \sqrt{2} Y^{2} - 648 \sqrt{2} Y^{3} +\frac{972 \sqrt{2} Y^{4}}{5} \\
B_+ & - \frac{19 \sqrt{2}}{20} +864 \sqrt{2} Y^{2} +\frac{4608 \sqrt{3} Y^{\frac{5}{2}}}{5} +648 \sqrt{2} Y^{3} \\
 & \quad - \frac{972 \sqrt{2} Y^{4}}{5} \\
B_- & \frac{19 \sqrt{2}}{20} - 864 \sqrt{2} Y^{2} +\frac{4608 \sqrt{3} Y^{\frac{5}{2}}}{5} - 648 \sqrt{2} Y^{3} \\
 & \quad +\frac{972 \sqrt{2} Y^{4}}{5} \\
\bottomrule
\end{tabular}
\end{center}
\needspace{6\baselineskip}
\end{samepage}
Beta$(2,1)$, high regime:
\begin{samepage}
\nopagebreak
\begin{center}\small
\renewcommand{\arraystretch}{1.35}
\begin{tabular}{>{$}l<{$}>{$}l<{$}}
\toprule
\varphi & \text{coefficient } P_\varphi(Y) \text{ in } F^{(2,1)},\ \tfrac14 \le Y \le \tfrac13 \\
\midrule
1 & \frac{11}{18} - \frac{20 Y}{3} +108 Y^{2} +216 Y^{3} \\
 & \quad - \frac{324 Y^{4}}{5} \\
\pi & \frac{19 \sqrt{2}}{40} - \frac{384 \sqrt{3} Y^{\frac{3}{2}}}{5} - 432 \sqrt{2} Y^{2} - 384 \sqrt{3} Y^{\frac{5}{2}} \\
 & \quad - 324 \sqrt{2} Y^{3} +\frac{486 \sqrt{2} Y^{4}}{5} \\
\sqrt{9Y-2} & \frac{13}{18} - \frac{17 Y}{15} - \frac{762 Y^{2}}{5} \\
\arctan\sqrt2 & 66 \sqrt{2} Y +486 \sqrt{2} Y^{2} +324 \sqrt{2} Y^{3} - \frac{486 \sqrt{2} Y^{4}}{5} \\
A_3 & - 66 \sqrt{2} Y - 54 \sqrt{2} Y^{2} \\
A_4 & \frac{1536 \sqrt{3} Y^{\frac{3}{2}}}{5} \\
W & - 432 \sqrt{2} Y^{2} - 324 \sqrt{2} Y^{3} +\frac{486 \sqrt{2} Y^{4}}{5} \\
B_+ & - \frac{19 \sqrt{2}}{40} +432 \sqrt{2} Y^{2} +\frac{2304 \sqrt{3} Y^{\frac{5}{2}}}{5} +324 \sqrt{2} Y^{3} \\
 & \quad - \frac{486 \sqrt{2} Y^{4}}{5} \\
B_- & \frac{19 \sqrt{2}}{40} - 432 \sqrt{2} Y^{2} +\frac{2304 \sqrt{3} Y^{\frac{5}{2}}}{5} - 324 \sqrt{2} Y^{3} \\
 & \quad +\frac{486 \sqrt{2} Y^{4}}{5} \\
\arctan(4\sqrt2-3\sqrt3) & - \frac{19 \sqrt{2}}{40} +432 \sqrt{2} Y^{2} - \frac{2304 \sqrt{3} Y^{\frac{5}{2}}}{5} +324 \sqrt{2} Y^{3} \\
 & \quad - \frac{486 \sqrt{2} Y^{4}}{5} \\
\arctan(4\sqrt2+3\sqrt3) & - \frac{19 \sqrt{2}}{40} +432 \sqrt{2} Y^{2} +\frac{2304 \sqrt{3} Y^{\frac{5}{2}}}{5} +324 \sqrt{2} Y^{3} \\
 & \quad - \frac{486 \sqrt{2} Y^{4}}{5} \\
\bottomrule
\end{tabular}
\end{center}
\needspace{6\baselineskip}
\end{samepage}
Beta$(2,2)$, upper-middle regime:
\begin{samepage}
\nopagebreak
\begin{center}\small
\renewcommand{\arraystretch}{1.35}
\begin{tabular}{>{$}l<{$}>{$}l<{$}}
\toprule
\varphi & \text{coefficient } P_\varphi(Y) \text{ in } F^{(2,2)},\ \tfrac29 \le Y \le \tfrac14 \\
\midrule
1 & \frac{8748 Y^{4}}{5} - \frac{746496 Y^{\frac{9}{2}}}{35} +\frac{478953 Y^{5}}{175} \\
 & \quad - \frac{106434 Y^{6}}{175} \\
\pi & \frac{4917 \sqrt{2}}{5600} +\frac{36693 \sqrt{2} Y}{800} - \frac{576 \sqrt{3} Y^{\frac{3}{2}}}{7} \\
 & \quad - \frac{26973 \sqrt{2} Y^{2}}{80} +\frac{31104 \sqrt{3} Y^{\frac{5}{2}}}{35} +\frac{2673 \sqrt{2} Y^{3}}{4} \\
 & \quad - \frac{93312 \sqrt{3} Y^{\frac{7}{2}}}{35} +20736 \sqrt{3} Y^{\frac{9}{2}} \\
\sqrt{9Y-2} & \frac{26629}{18900} +\frac{11551 Y}{2100} - \frac{1251 Y^{2}}{175} \\
 & \quad - \frac{102708 Y^{3}}{175} +\frac{2463534 Y^{4}}{175} \\
\arctan\sqrt2 & \frac{39366 \sqrt{2} Y^{4}}{5} +\frac{9034497 \sqrt{2} Y^{5}}{350} - \frac{1003833 \sqrt{2} Y^{6}}{175} \\
A_3 & - \frac{36693 \sqrt{2} Y}{200} +\frac{26973 \sqrt{2} Y^{2}}{20} - 2673 \sqrt{2} Y^{3} \\
A_4 & \frac{13824 \sqrt{3} Y^{\frac{3}{2}}}{35} - \frac{746496 \sqrt{3} Y^{\frac{5}{2}}}{175} \\
W & \frac{78732 \sqrt{2} Y^{4}}{5} +\frac{9034497 \sqrt{2} Y^{5}}{175} - \frac{2007666 \sqrt{2} Y^{6}}{175} \\
B_+ & - \frac{4917 \sqrt{2}}{2800} +\frac{1119744 \sqrt{3} Y^{\frac{7}{2}}}{175} - \frac{78732 \sqrt{2} Y^{4}}{5} \\
 & \quad - \frac{248832 \sqrt{3} Y^{\frac{9}{2}}}{5} - \frac{9034497 \sqrt{2} Y^{5}}{175} +\frac{2007666 \sqrt{2} Y^{6}}{175} \\
B_- & \frac{4917 \sqrt{2}}{2800} +\frac{1119744 \sqrt{3} Y^{\frac{7}{2}}}{175} +\frac{78732 \sqrt{2} Y^{4}}{5} \\
 & \quad - \frac{248832 \sqrt{3} Y^{\frac{9}{2}}}{5} +\frac{9034497 \sqrt{2} Y^{5}}{175} - \frac{2007666 \sqrt{2} Y^{6}}{175} \\
\bottomrule
\end{tabular}
\end{center}
\needspace{6\baselineskip}
\end{samepage}
Beta$(2,2)$, high regime:
\begin{samepage}
\nopagebreak
\begin{center}\small
\renewcommand{\arraystretch}{1.35}
\begin{tabular}{>{$}l<{$}>{$}l<{$}}
\toprule
\varphi & \text{coefficient } P_\varphi(Y) \text{ in } F^{(2,2)},\ \tfrac14 \le Y \le \tfrac13 \\
\midrule
1 & \frac{4409}{37800} +\frac{9099 Y}{1400} - \frac{1701 Y^{2}}{20} \\
 & \quad +729 Y^{3} - \frac{26244 Y^{4}}{5} - \frac{1436859 Y^{5}}{175} \\
 & \quad +\frac{319302 Y^{6}}{175} \\
\pi & \frac{4917 \sqrt{2}}{5600} - \frac{1728 \sqrt{3} Y^{\frac{3}{2}}}{35} +\frac{93312 \sqrt{3} Y^{\frac{5}{2}}}{175} \\
 & \quad - \frac{93312 \sqrt{3} Y^{\frac{7}{2}}}{35} +\frac{39366 \sqrt{2} Y^{4}}{5} +20736 \sqrt{3} Y^{\frac{9}{2}} \\
 & \quad +\frac{9034497 \sqrt{2} Y^{5}}{350} - \frac{1003833 \sqrt{2} Y^{6}}{175} \\
\sqrt{9Y-2} & \frac{26629}{37800} +\frac{11551 Y}{4200} - \frac{1251 Y^{2}}{350} \\
 & \quad - \frac{51354 Y^{3}}{175} +\frac{1231767 Y^{4}}{175} \\
\arctan\sqrt2 & \frac{36693 \sqrt{2} Y}{400} - \frac{26973 \sqrt{2} Y^{2}}{40} +\frac{2673 \sqrt{2} Y^{3}}{2} \\
 & \quad - \frac{39366 \sqrt{2} Y^{4}}{5} - \frac{9034497 \sqrt{2} Y^{5}}{350} +\frac{1003833 \sqrt{2} Y^{6}}{175} \\
A_3 & - \frac{36693 \sqrt{2} Y}{400} +\frac{26973 \sqrt{2} Y^{2}}{40} - \frac{2673 \sqrt{2} Y^{3}}{2} \\
A_4 & \frac{6912 \sqrt{3} Y^{\frac{3}{2}}}{35} - \frac{373248 \sqrt{3} Y^{\frac{5}{2}}}{175} \\
W & \frac{39366 \sqrt{2} Y^{4}}{5} +\frac{9034497 \sqrt{2} Y^{5}}{350} - \frac{1003833 \sqrt{2} Y^{6}}{175} \\
B_+ & - \frac{4917 \sqrt{2}}{5600} +\frac{559872 \sqrt{3} Y^{\frac{7}{2}}}{175} - \frac{39366 \sqrt{2} Y^{4}}{5} \\
 & \quad - \frac{124416 \sqrt{3} Y^{\frac{9}{2}}}{5} - \frac{9034497 \sqrt{2} Y^{5}}{350} +\frac{1003833 \sqrt{2} Y^{6}}{175} \\
B_- & \frac{4917 \sqrt{2}}{5600} +\frac{559872 \sqrt{3} Y^{\frac{7}{2}}}{175} +\frac{39366 \sqrt{2} Y^{4}}{5} \\
 & \quad - \frac{124416 \sqrt{3} Y^{\frac{9}{2}}}{5} +\frac{9034497 \sqrt{2} Y^{5}}{350} - \frac{1003833 \sqrt{2} Y^{6}}{175} \\
\arctan(4\sqrt2-3\sqrt3) & - \frac{4917 \sqrt{2}}{5600} - \frac{559872 \sqrt{3} Y^{\frac{7}{2}}}{175} - \frac{39366 \sqrt{2} Y^{4}}{5} \\
 & \quad +\frac{124416 \sqrt{3} Y^{\frac{9}{2}}}{5} - \frac{9034497 \sqrt{2} Y^{5}}{350} +\frac{1003833 \sqrt{2} Y^{6}}{175} \\
\arctan(4\sqrt2+3\sqrt3) & - \frac{4917 \sqrt{2}}{5600} +\frac{559872 \sqrt{3} Y^{\frac{7}{2}}}{175} - \frac{39366 \sqrt{2} Y^{4}}{5} \\
 & \quad - \frac{124416 \sqrt{3} Y^{\frac{9}{2}}}{5} - \frac{9034497 \sqrt{2} Y^{5}}{350} +\frac{1003833 \sqrt{2} Y^{6}}{175} \\
\bottomrule
\end{tabular}
\end{center}
\end{samepage}

\section{Visual atlas of the spherical derivation}\label{app:atlas}

The derivation of Appendix~\ref{app:trig} is fully specified by its
formulas, but each object has a picture.  This appendix collects
them: the cap's azimuthal anatomy across the three upper-regime
phases (Figure~\ref{fig:atlas-caps}), the cell measure $m(\psi)$
against direct sampling (Figure~\ref{fig:atlas-mpsi}), and the
orientation views of the rhombic dodecahedron
(Figure~\ref{fig:atlas-angles}); the face-level picture
of the same mechanism appears as Figure~\ref{fig:atlas-sphere} in the main text.

\begin{figure}[p]
\centering
\includegraphics[width=\linewidth]{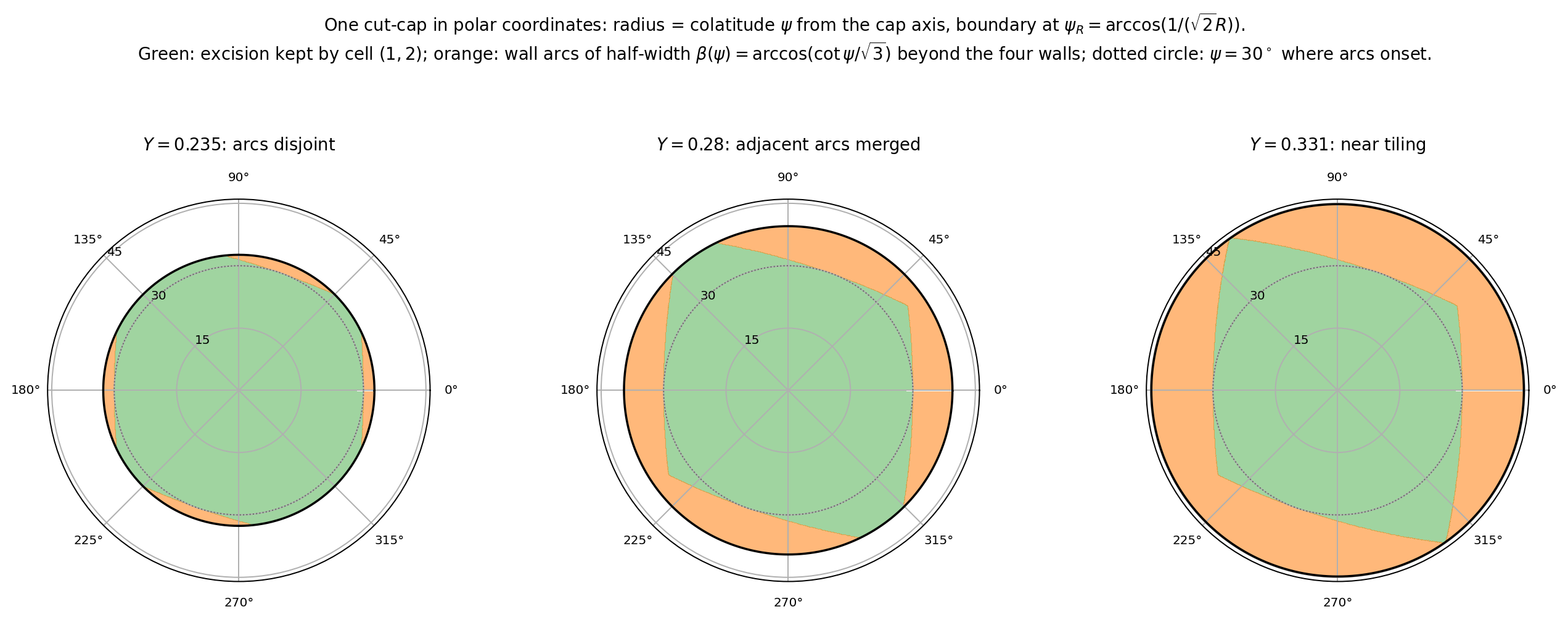}
\caption{Azimuthal anatomy of one cap through the upper regimes, in
polar coordinates with radius $=$ colatitude $\psi$ and the cap
boundary $\psi_R$ drawn bold.  Wall arcs of half-width
$\beta(\psi)$ open at $\psi = 30^\circ$ around the four wall
azimuths with alternating gaps \eqref{eq:app-gaps}.
\emph{Left} ($Y = 0.235$): the four arcs are disjoint---regime
\eqref{eq:fR4a}, second line of \eqref{eq:app-m}.
\emph{Middle} ($Y = 0.28 > 1/4$): the arcs across the two narrow
gaps have merged---regime \eqref{eq:fR4b}, third line of
\eqref{eq:app-m}.
\emph{Right} ($Y = 0.331$): approaching the ceiling the
cell-restricted cap grows while its surviving boundary intervals
shrink, toward the tiling identity \eqref{eq:app-tiling}, where the
twelve cell-restricted caps tile the sphere and the density
vanishes.}
\label{fig:atlas-caps}
\end{figure}

\begin{figure}[p]
\centering
\includegraphics[width=0.85\linewidth]{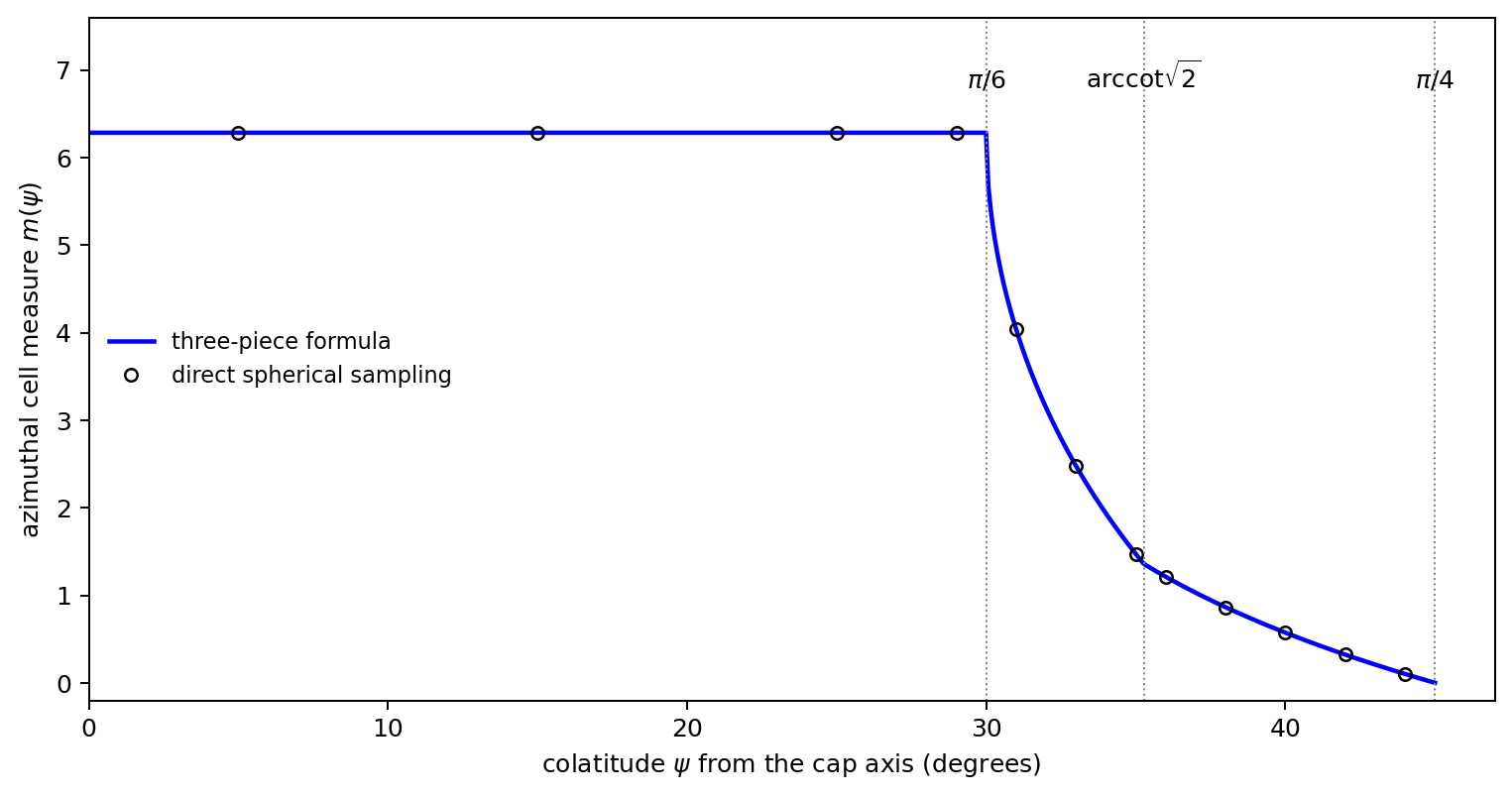}
\caption{The azimuthal cell measure $m(\psi)$ of \eqref{eq:app-m}:
the three-piece elementary formula against direct sampling of the
cell condition on the sphere.  The square-root onset at $\pi/6$ (an infinite right slope,
not a corner kink) is the arc onset,
the kink at $\operatorname{arccot}\sqrt2$ is the adjacent-arc merge,
and the zero at $\pi/4$ is the tiling identity; through
$\psi_R(Y)$ these become the CDF thresholds $2/9$, $1/4$, and
$1/3$.}
\label{fig:atlas-mpsi}
\end{figure}

\begin{figure}[p]
\centering
\includegraphics[width=\linewidth]{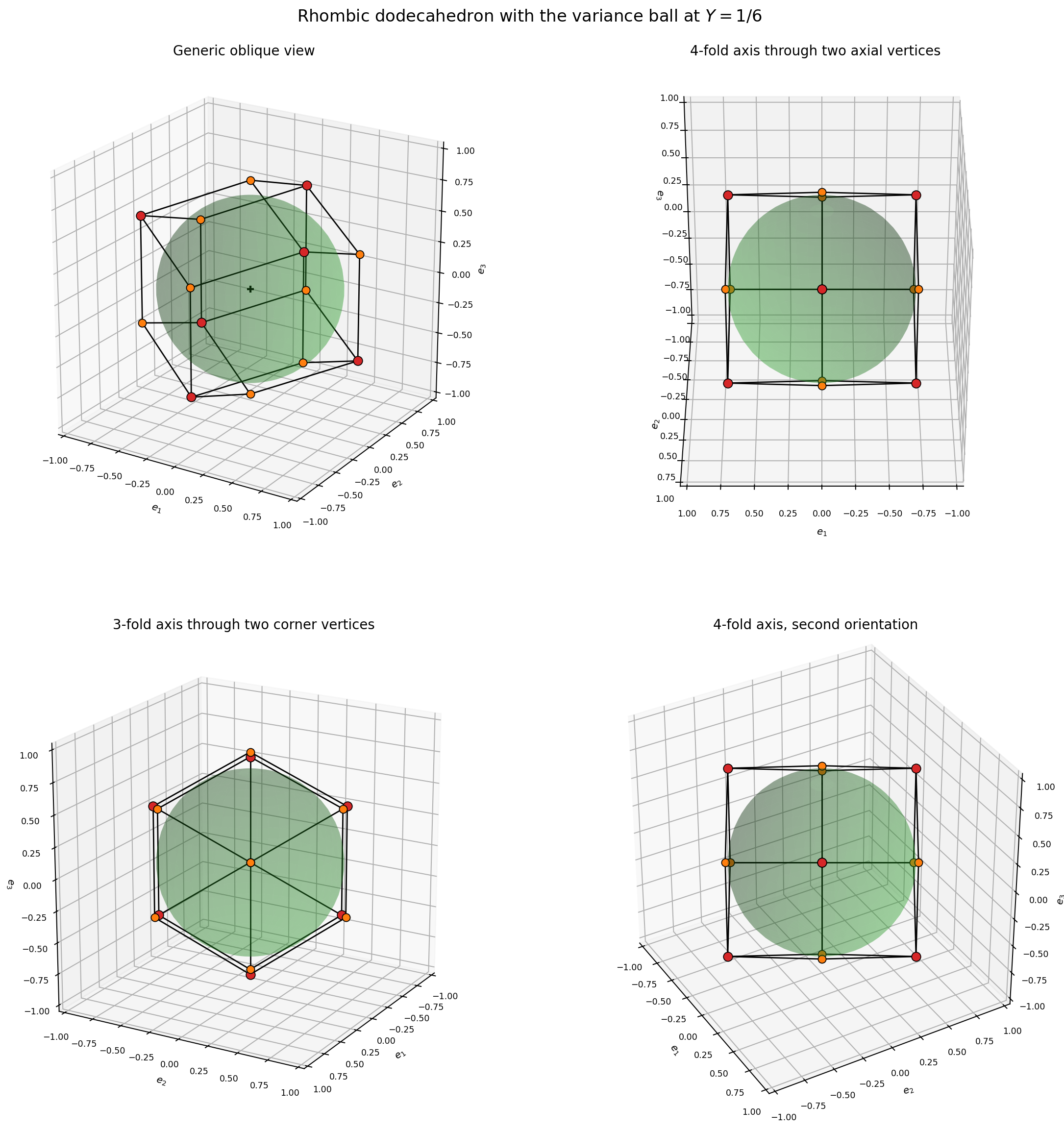}
\caption{Orientation views of the rhombic dodecahedron with the
variance ball at the face threshold $Y = 1/6$.  \emph{Top left:}
generic oblique view.  \emph{Top right and bottom right:} views
along two different $4$-fold axes, each through two antipodal axial
vertices, square silhouette.  \emph{Bottom left:} view along a $3$-fold axis through
two antipodal corner vertices, hexagonal silhouette---the direct
visual parallel of the $n = 3$ hexagon.  The symmetry visible here is the full octahedral group $O_h$ of the
polytope, whose subgroup $T_d \cong S_4$ of coordinate permutations
makes the twelve cap contributions of \eqref{eq:app-assembly} equal.}
\label{fig:atlas-angles}
\end{figure}

\end{document}